\documentclass{article}

\usepackage{arxiv}

\usepackage[utf8]{inputenc} 
\usepackage[T1]{fontenc}    
\usepackage{hyperref}       
\usepackage{url}            
\usepackage{booktabs}       
\usepackage{amsfonts}       
\usepackage{nicefrac}       
\usepackage{microtype}      
\usepackage{lipsum}		
\usepackage{graphicx}
\usepackage{natbib}
\usepackage{doi}

\usepackage{amsmath}
\usepackage{dsfont}
\usepackage{booktabs}
\usepackage{longtable}
\usepackage{multirow}
\usepackage{graphicx}
\usepackage[utf8]{inputenc}
\usepackage[english]{babel}
\usepackage{enumitem}
\usepackage{amsthm}
\usepackage[ruled,vlined]{algorithm2e}
\usepackage{accents}
\setlength{\parindent}{0ex}

\usepackage{graphicx}
\usepackage{blindtext}
\usepackage{csvsimple}
\usepackage{tikz}
\usetikzlibrary{shapes}
\usepackage{pgfplots}
	\pgfplotsset{compat=newest}
\usepackage{pgfplotstable}
\usepgfplotslibrary{fillbetween}
\usepackage{amsfonts}

\usepackage{bm}
\usepackage{mathtools}
\usepackage[utf8]{inputenc} 
\usepackage[T1]{fontenc}    
\usepackage{hyperref}       
\usepackage{url}            
\usepackage{booktabs}       
\usepackage{amsfonts}       
\usepackage{nicefrac}       
\usepackage{microtype}      
\usepackage{lipsum}		
\usepackage{graphicx}
\usepackage{doi}
\newcommand{\fig}[4]{\begin{figure}[ht]\centering\includegraphics[width=#1\linewidth]{Figures/#2}\caption{#3}\label{#4}\end{figure}\\}
\usepackage{makecell}
    
\newtheorem{theorem}{Theorem}[section]

\newtheorem{proposition}{Proposition}[section]
\newtheorem{lemma}[theorem]{Lemma}

\theoremstyle{definition}
\newtheorem{definition}{Definition}[section]

\theoremstyle{remark}
\newtheorem*{remark}{Remark}
\newcommand\R[0]{\mathbb{R}}
\newcommand\argmax[0]{\mathrm{argmax}}
\newcommand\argmin[0]{\mathrm{argmin}}
\newcommand\Cobs[0]{\mathbf{C}_{\mathrm{obs}}}
\newcommand\Cn[0]{\mathbf{C}_n}
\newcommand\Ctot[0]{\boldsymbol\Sigma_n}
\newcommand\Ctotup[0]{\boldsymbol\Sigma_{n+1}}
\newcommand\Ceff[0]{\mathbf{C}_{\mathrm{eff}}}
\newcommand\Cnu[0]{\mathbf{C}_\nu}
\newcommand\xmap[0]{x_n^{(m)}}

\title{Sequential design for surrogate modeling in Bayesian inverse problems}


\author{\hspace{1mm}Paul Lartaud \\
        CEA DAM Île-de-France\\
        Centre de Mathématiques Appliquées, \\
        École polytechnique, \\
        Institut Polytechnique de Paris\\
	\texttt{paul.lartaud@polytechnique.edu} \\
	\And \hspace{1mm}Philippe Humbert \\
        CEA DAM Île-de-France\\
	\And \hspace{1mm}Josselin Garnier \\
        Centre de Mathématiques Appliquées, \\
        École polytechnique, \\
        Institut Polytechnique de Paris\\
}

\date{}


\hypersetup{
pdftitle={Sequential design - Lartaud},
pdfsubject={q-bio.NC, q-bio.QM},
pdfauthor={P. Lartaud, P. Humbert and J. Garnier},
pdfkeywords={sequential design, surrogate model, inverse problem},
}

\begin{document}
\maketitle

\begin{abstract}
Sequential design is a highly active field of research in active learning which provides a general framework for designing computer experiments with limited computational budgets. It aims to create efficient surrogate models to replace complex computer codes. Some sequential design strategies can be understood within the Stepwise Uncertainty Reduction (SUR) framework. In the SUR framework, each new design point is chosen by minimizing the expectation of a metric of uncertainty with respect to the yet unknown new data point. These methods offer an accessible framework for sequential experiment design, including almost sure convergence for common uncertainty functionals. 

This paper introduces two strategies. The first one, entitled Constraint Set Query (CSQ) is adapted from D-optimal designs where the search space is constrained in a ball for the Mahalanobis distance around the maximum a posteriori. The second, known as the IP-SUR (Inverse Problem SUR) strategy, uses a weighted-integrated mean squared prediction error as the uncertainty metric and is derived from SUR methods. It is tractable for Gaussian process surrogates with continuous sample paths. It comes with a theoretical guarantee for the almost sure convergence of the uncertainty functional. The premises of this work are highlighted in various test cases, in which these two strategies are compared to other sequential designs.
\end{abstract}

\section{Introduction}\label{sec:intro}
In the field of uncertainty quantification, the resolution of ill-posed inverse problems with a Bayesian approach is a widely used method. It gives access to the posterior distribution of the uncertain inputs from which various quantities of interest can be calculated for decision-making \citep{stuart2010inverse, kaipio2006statistical}. However, sampling the posterior distribution often requires a large number of calls to the direct model. This limits the tractability of such methods when the direct model is a complex computer code. To overcome this obstacle, it is common practice to replace the direct model with a surrogate such as deep neural networks \citep{yan2019adaptive}, physics-informed neural networks \citep{li2023surrogate}, polynomial chaos \citep{yan2019adaptivepce} or Gaussian processes \citep{drovandi2018accelerating, frangos2010surrogate, sauer2023active, chen2021gaussian}. These surrogates help in reducing the computational burden of the inverse problem resolution. The quality of the uncertainty prediction is then directly linked to the quality of the surrogate model. In some applications \citep{mai2017seismic}, the computational budget implies a scarcity of the numerical data used to build the surrogate model, since they are given by calls of the direct model. To build the best possible surrogate model, adaptive strategies have been developed to make the most out of the given computational budget \citep{wang2018adaptive, wan2016stochastic}. 

This is the premise of sequential designs for computer experiments \citep{santner2003design, gramacy2009adaptive, sacks1989designs}. While space-filling designs based on Latin Hypercube Sampling (LHS) \citep{stein1987large} or minimax distance designs \citep{johnson1990minimax} have been widely investigated, they are not suited for the specific problem at stake where the region of interest only covers a small fraction of the input space. For such cases, criterion-based designs have been introduced in various domains, such as reliability analysis \citep{lee2008sampling, du2002sequential, agrell2021sequential, azzimonti2021adaptive, dubourg2013metamodel, cole2023entropy}, Bayesian optimization \citep{shahriari2015taking, imani2020bayesian}, contour identification \citep{ranjan2008sequential} or more recently in Bayesian calibration \citep{ezzat2018sequential, surer2024sequential, kennedy2001bayesian} or in multi-fidelity surrogate modeling \citep{stroh2022sequential}. Most common approaches include criteria based on the maximization of the information gain \citep{shewry1987maximum}, D-optimal criteria based on the maximization of the predictive covariance determinant \citep{wang2006d, de1995d, osio1996engineering, mitchell2000algorithm} or designs minimizing a functional of the mean squared prediction error (MSPE) \citep{kleijnen2004application}. Other applications use the trace of the covariance instead of the determinant. For our applicative case, the strong correlations across outputs point toward the use of the determinant. Weighted integrated MSPE optimal designs have been used to improve surrogate models and outperform traditional LHS sampling \citep{picheny2010adaptive}. In Bayesian inverse problems, sequential design strategies have been investigated to produce estimators for the inverse problem likelihood \citep{sinsbeck2021sequential} or the Bayesian model evidence \citep{sinsbeck2017sequential} to provide insight on model selection. In \citep{li2014adaptive}, the design points for the surrogate model are obtained iteratively from variational posterior distributions designed to minimize the Kullback-Leibler divergence with the true posterior. 

In this paper, two novel sequential design strategies named Inverse Problem SUR (IP-SUR) and Constraint Set Query (CSQ) are introduced for Gaussian process surrogate models in Bayesian inverse problems. The first one is an objective-oriented design based on the Stepwise Uncertainty Reduction (SUR) paradigm applied to the posterior-weighted integrated MSPE. It is shown to be tractable for GP surrogates. Its integration into the SUR paradigm makes it easily interpretable. Moreover, it is supported by a theoretical guarantee of almost sure convergence of the metric of uncertainty, which is rarely obtained for sequential design strategies. The CSQ design is a simpler strategy adapted from D-optimal design strategies. It provides a more accessible design though it lacks the convergence guarantee of IP-SUR. Overall, the two sequential design strategies presented in this work are complementary. CSQ appears more computationally efficient for higher dimensions because it circumvents the need for intermediate MCMC samplings, whereas IP-SUR is more theoretically sound. We insist that the proposed strategies be used for fine-tuning a surrogate to a given inverse problem. They should not be used for the initial construction of a surrogate. For this task, space-filling or criterion-based designs, such as D-optimal designs, are more relevant.

The following section presents the fundamentals of Gaussian process regression, Bayesian inverse problems, and SUR designs. Then, we present our two methods and develop the theoretical foundations of the IP-SUR strategy. Finally, we apply our strategies to several test cases and compare them to other designs (naive, D-optimal, Bayes risk minimization).

\section{Bayesian inverse problems}\label{sec:inv_problem_section}
\subsection{Problem statement}
Consider the following inverse problem. We would like to identify the parameters $x \in \mathcal{X} \subset \R^p$ based on the observations of some vectorial quantity $y \in  \R^d$. The link between the inputs $x$ and the outputs $y$ is provided by a direct model $f \colon \mathcal{X} \xrightarrow[]{} \R^d$. Assume that we are given some noisy observations of the direct model $\mathbf{y} = (y^{(k)})_{1 \leq k \leq N}$ for $N \geq 1$. We would like to estimate parameters $x$ and quantify the uncertainty associated with this estimation. The traditional approach is to solve the inverse problem with a Bayesian approach \citep{dashti2013bayesian, scales2001prior}. 

In the Bayesian paradigm, the inputs $x$ and the outputs $y$ are considered random variables and the goal is to estimate the posterior distribution $p(\ \cdot \ | \mathbf{y})$, which is the probability distribution of the inputs $x$ conditioned by the noisy observations $\mathbf{y}$ of the outputs. We assume the observations are independent and identically distributed with zero-mean Gaussian noise:
\begin{equation}\label{eq:observation_def}
    y^{(k)} = f(x) + \varepsilon^{(k)} \text{ with } \varepsilon^{(k)} \sim \mathcal{N}\left(\mathbf{0}, \Cobs \right)
\end{equation}
for $1 \leq k  \leq N$, and where $\Cobs \in S_d(\R)^+$ is the covariance matrix of the observations, and $S_d(\R)^+$ is the set of symmetric positive-definite matrices of size $d \times d$.

Let $x \in \mathcal{X}$. The posterior distribution density $p(x | \mathbf{y})$ can be expressed with Bayes' theorem as the product of a prior distribution with density $p(x)$ and an analytically tractable likelihood $L(\mathbf{y}|x)$:
\begin{equation*}
    p(x | \mathbf{y}) \propto p(x) L(\mathbf{y} | x) \propto p(x) \exp\left(- \frac{1}{2}\sum\limits_{k=1}^N \| y^{(k)} - f(x)\|^2_{\Cobs} \right)
\end{equation*}
where the notation $\| a \|^2_{\mathbf{C}} = a^T \mathbf{C}^{-1} a$ refers to the Mahanalobis norm, for any vector $a \in \R^d$ and $\mathbf{C} \in S_d(\R)^+$.

The prior $p(\ \cdot \ )$ is chosen by the user directly. It can incorporate expert knowledge or information from other observations. It can also be taken mostly non-informative \citep{jeffreys1946invariant, ghosh2011objective, consonni2018prior}. 

Given a non-normalized posterior density, the posterior distribution can be sampled with Monte Carlo Markov Chain (MCMC) methods. MCMC algorithms produce ergodic Markov chains whose invariant distribution is the target posterior distribution. MCMC methods remove the need to evaluate the multiplicative constant $\int p(x') L(\mathbf{y} | x') dx'$ which is often intractable.

The Bayesian approach as depicted here has two major flaws, however. The direct model $f$ is assumed to be known, and the sampling of the posterior distribution requires a large number of calls to this direct model. For real-world applications, the direct model is often given by experiments or by complex computer codes. In some cases, an analytical direct model can be found given strong assumptions, but it comes with a systematic bias that needs to be accounted for. Thus, obtaining the posterior distribution is often too computationally expensive. A way to circumvent this obstacle is to rely on surrogate models.

\subsection{Gaussian process surrogate models}\label{subsec:sm_with_inv_problem}
In this work, we consider Gaussian Process (GP) based surrogate models because GPs induce closed-form posterior distributions. We briefly recall the basic concepts of GP surrogates. For a comprehensive review on the subject, we refer to \citep{williams2006gaussian, gramacy2020surrogates}. 

Let $\mathcal{X} \subset \R^p$ be an index set. A Gaussian process $z$ is a stochastic process such that for any finite collection $\mathbf{x} = (x_1, ..., x_n) \in \mathcal{X}^n$ with $n \geq 1$, the random variable $z(\mathbf{x}) = (z(x_1), ..., z(x_n))$ follows a multivariate normal distribution.  

The distribution of a Gaussian process is entirely determined by its mean function $m\colon \mathcal{X} \longrightarrow \R$ and its positive definite covariance function $k\colon \mathcal{X} \times \mathcal{X} \longrightarrow \R$. Throughout this work, a GP with mean function $m$ and covariance function $k$ is denoted by  $z \sim \mathcal{GP}\left(m(x), k(x, x') \right)$. 

In this work, we are considering multi-output GP models. These models are built with the Linear Model of Coregionalization (LMC) introduced in \citep{bonilla2007multi}. LMC models allow correlations across outputs to be modeled with a linear combination of independent latent GPs. They produce a multi-output covariance function $k \colon \mathcal{X} \times \mathcal{X} \rightarrow \mathcal{M}_d(\R)$ where $\mathcal{M}_d(\R)$ is the set of real-valued square matrices of size $d$.

Consider a collection of inputs $\mathbf{x} \in \mathcal{X}^n$ and a multi-output GP $z \sim \mathcal{GP}(\mathbf{0}, k(x, x'))$.  The distribution of the GP for the inputs $\mathbf{x}$ is written as:
\begin{equation*}
    z(\mathbf{x}) \sim \mathcal{N}\left( \mathbf{0}, \mathbf{K}(\mathbf{x}) \right)
\end{equation*}
where $\mathbf{K}(\mathbf{x})$ is a $nd \times nd$ block matrix whose constitutive blocks are given by $k(x_i, x_j)$ for $1 \leq i, j \leq n$.  

Gaussian processes can be used to introduce a prior on a functional space. The covariance function defines the class of random functions represented by the GP prior. The most standard covariance kernels are the stationary RBF and Matérn kernels. Depending on the expected properties of the target function (periodicity, regularity, ...) various classes of covariance kernels can be used. The mean function is more straightforward and is often set to a constant, though more advanced approaches exist to improve the performance of the GP surrogate \citep{schobi2015polynomial}. 

GP-based surrogate models are widely used as predictor models in uncertainty quantification because they provide predictive means and variances that are analytically tractable and rely only on matrix products and inversions. 

Suppose we are given a training set of $n \geq 1$ input-output pairs $(\mathbf{x}, \mathbf{z})$ with $\mathbf{z} = z(\mathbf{x})$. For any $x_* \in \mathcal{X}$, one can obtain a predictive distribution at $x_*$ by conditioning the predictive distribution $z(\mathbf{x}, x_*)$ by the training data $(\mathbf{x}, \mathbf{z})$. The predictive distribution for the input $x_*$ after conditioning is given by:
\begin{equation*}
    z^{(n)}(x_*) \sim \mathcal{N}\left(m_n(x_*), \Cn(x_*) \right) 
\end{equation*}
\begin{equation*}
    m_n(x_*) = \mathbf{K}(x_*, \mathbf{x}) \mathbf{K}(\mathbf{x})^{-1} \mathbf{z} 
\end{equation*}
\begin{equation*}
    \Cn(x_*) =  \mathbf{K}(x_*, x_*) - \mathbf{K}(x_*, \mathbf{x}) \mathbf{K}(\mathbf{x})^{-1} \mathbf{K}(x_*, \mathbf{x})^T.
\end{equation*}

The hyperparameters of the GP model are the latent covariance kernels hyperparameters and the linear combination coefficients in the LMC model. For stochastic computer codes, we may add one hyperparameter per output dimension, which is known as the nugget noise level. The hyperparameters are tuned by maximization of the log-marginal likelihood $\log p(\mathbf{z} | \mathbf{x})$ with gradient-descent type methods \citep{byrd1995limited}. The log-marginal likelihood is given by:
\begin{equation}\label{eq:lml}
    \log p(\mathbf{z} | \mathbf{x}) = - \frac{nd}{2} \log(2 \pi) -\frac{1}{2} \log | \mathbf{K}(\mathbf{x}) | - \frac{1}{2} \mathbf{z}^T \mathbf{K}^{-1} \mathbf{z}
\end{equation}
where $| \mathbf{K}(\mathbf{x}) |$ is the determinant of the matrix $\mathbf{K}(\mathbf{x})$.

\subsection{Surrogate models in Bayesian inverse problems}
How are GP surrogates used in practice in Bayesian inverse problems? Starting from the consideration that the direct model $f$ is too costly, one can build a GP surrogate with some training data $(\mathbf{x}, \mathbf{z})$ obtained by calls to the function $f$. We wish to obtain a posterior distribution $p(\ \cdot \ | \mathbf{y})$ given some noisy observations $\mathbf{y}$ defined in \eqref{eq:observation_def}. The objective is then to include the predictive distribution of the GP model in the likelihood of the inverse problem.

The uncertainties in the inverse problem resolution can be split between the aleatoric uncertainty, linked to the noise of the observations, and the epistemic uncertainty which is the uncertainty of the surrogate model itself \citep{lartaud2023multi}. The two sources of uncertainty are independent from one another and one can express the observations as noisy predictions of the GP predictive mean, with two independent sources of errors:
\begin{equation*}
    \mathbf{y} = \mathbf{m}_n(x) + \boldsymbol{\eta}(x) + \boldsymbol{\varepsilon}
\end{equation*}
where $\mathbf{m}_n(x) = (m_n(x), ..., m_n(x))^T \in \R^{Nd}$, $\boldsymbol{\eta}(x) \sim \mathcal{N}\left(\mathbf{0}, \Cn(x) \otimes \mathcal{U}_N \right)$ and $\boldsymbol{\varepsilon} \sim \mathcal{N}\left(\mathbf{0}, \Cobs \otimes \mathcal{I}_N \right)$ are respectively the epistemic and aleatoric uncertainty terms. $\mathcal{I}_N$ is the identity matrix of size $N \times N$ and $\mathcal{U}_N$ is the matrix of ones of size $N \times N$. The notation $\otimes$ refers to the Kronecker product between matrices. 

This formulation expresses the idea that the observations are independent with respect to the observation noise, but are all linked together by the surrogate model error. The inverse problem can thus be solved by computing a posterior distribution with a new likelihood which accounts for both sources of uncertainties:
\begin{equation}\label{eq:total_density}
    p_n(x | \mathbf{y}) \propto L_n(\mathbf{y} | x) p(x) = \frac{1}{\sqrt{(2 \pi )^{Nd} | \Ctot(x) |}} \exp\left[- \frac{1}{2} \| \mathbf{y}_{\mathrm{flat}} - \mathbf{m}_n(x) \|^2_{\Ctot(x)} \right] p(x)
\end{equation}
with $\Ctot(x) = \Cn(x) \otimes \mathcal{U}_N + \Cobs \otimes \mathcal{I}_N \in \R^{Nd \times Nd}$, $\mathbf{y}_{\mathrm{flat}} \in \R^{Nd}$ is the flattened vector of observations and where $L_n(\mathbf{y} | x)$ is the likelihood in this expression.

This expression can be further simplified. We can show that the posterior density is proportional to:
\begin{equation*}
    p_n(x | \mathbf{y}) \propto L_n(\mathbf{y} | x) p(x)\propto |\Ceff(x)|^{-1/2} \exp \left(-\frac{1}{2} \|\overline{y} - m_n(x) \|^2_{\Ceff(x)}\right) p(x)
\end{equation*}
with $\Ceff(x) = \Cn(x) + \frac{1}{N}\Cobs \in S_d(\R)^+$ and $\overline{y} = \frac{1}{N} \sum\limits_{k=1}^N y^{(k)} \in \R^d$. This simplified expression is used in numerical applications, but the theoretical work is based on \eqref{eq:total_density}. This posterior distribution quantifies both the uncertainties from the observations and the uncertainties of the surrogate model. It can be obtained by standard MCMC sampling. 

\subsection{Stepwise uncertainty reduction}

The Stepwise Uncertainty Reduction (SUR) paradigm was introduced in \cite{bect2012sequential} and \cite{villemonteix2009informational}. It serves as a general-purpose theoretical framework for the design of computer experiments. SUR strategies seek to iteratively select new design points that minimize a given metric of uncertainty. The main advantage of SUR is to provide a theoretical foundation for many sequential design strategies. For instance, the SUR paradigm is applicable in Bayesian optimization with the efficient global optimization algorithm (EGO) \citep{jones1998efficient, snoek2012practical} or in reliability analysis \citep{chevalier2014fast, bect2017bayesian}. Before diving into our contribution, we provide a brief introduction to the SUR framework in this section.

Let $\mathbb{F}$ be a functional space, and $\mathbb{M}$ the set of Gaussian measures on $\mathbb{F}$. In our problem, the functional space considered is the space of continuous $\R^d$-valued functions on a compact space $\mathcal{X} \subset \R^p$ such that $\mathbb{F} = \mathcal{C}(\mathcal{X})$. In everything that follows, we will consider Gaussian processes with continuous sample paths, which can be understood as random elements of $\mathbb{F}$ \citep{vakhania1987probability, bogachev1998gaussian}. In particular, \citep{van2008reproducing} showed the equivalence between Gaussian measures and GP with continuous sample paths. 

Throughout this section, we consider a Gaussian measure $\nu \in \mathbb{M}$ and a GP with distribution $\nu$. Its distribution at $x \in \mathcal{X}$ is written as $z(x) \sim \mathcal{N}\left(m_\nu(x), \Cnu(x) \right)$.
\begin{definition}
    A sequential design is a collection $(x_n)_{n \geq 1} \in \mathcal{X}^n$ such that for all $n \geq 1$, $x_{n+1}$ is $\mathcal{F}_{n}$-measurable, with $\mathcal{F}_{n}$ the $\sigma$-algebra generated by the collection $(x_1, z_1, ..., x_n, z_n)$ with $z_i = z(x_i)$.
\end{definition}

We know that conditioning $z$ with a finite number $n$ of input-output pairs $(\mathbf{x}, \mathbf{z})$ yields a GP (with continuous sample paths). Consider a sequential design $(x_n)_{n \geq 1}$. We denote by $\nu_n$ the probability distribution of the conditioned GP. We know that $\nu_n$ is a Gaussian measure. The $z_n$ are given by calls to the direct model. Prior to these calls, they are unknown. In the SUR framework, the $z_n$ are modeled as random variables by $z_n = z(x_n)$. Because the $z_n$ are random variables, $\nu_n$ can be understood as a random element of $\mathbb{M}$. 

More precisely, \citep{bect2016supermartingale} showed that there exists a measurable mapping 
\begin{align*}
(\mathcal{X} \times \R^d)^n \times \mathbb{M}  &\longrightarrow \mathbb{M}\\
(x_1, z_1, ..., x_n, z_n, \nu) &\longmapsto \nu | (x_1, z_1, ..., x_n, z_n)
\end{align*}
such that for any Gaussian process $z$ with probability measure $\nu \in \mathbb{M}$ and any sequential design $(x_n)_{n \geq 1}$, the measure $\nu | (x_1, z_1, ..., x_n, z_n)$ is the conditional distribution of $z$ given $\mathcal{F}_n$. With our notation, $\nu_n = \nu | (x_1, z_1, ..., x_n, z_n)$.
 
Let us consider a measurable functional $\mathcal{H} \colon \mathbb{M} \xrightarrow[]{} \R_+$ which will serve as the metric of uncertainty that we would like to minimize. The goal of SUR strategies is to find a sequential design $(x_n)_{n \geq 1}$ which guarantees the almost sure convergence of the metric of uncertainty towards $0$:
\begin{equation*}
   \mathcal{H}(\nu_n) \xrightarrow[n \to +\infty]{a.s.} 0.
\end{equation*}

\begin{definition}
    Consider a functional $\mathcal{H} \colon \mathbb{M} \xrightarrow[]{} \R_+$. The functional $\mathcal{H}$ is said to have the supermartingale property if and only if, for any Gaussian process $z$ with continuous sample paths, the sequence $\mathcal{H}(\nu_n)$ is a $\mathcal{F}_n$-supermartingale or in other words for any $x \in \mathcal{X}$:
    \begin{equation*}
        \mathbb{E}_{n, x} \left[\mathcal{H}(\nu_{n+1}) \right] \leq\mathcal{H}(\nu_n)
    \end{equation*}
    where $ \mathbb{E}_{n, x}$ denotes the conditional expectation given $\mathcal{F}_n$ and $x_{n+1} = x$. We recall that $\nu_n$ is the probability distribution of the GP obtained by conditioning $z$ by $(\mathbf{x}, \mathbf{z})$.
\end{definition}

\begin{definition}
    If $\mathcal{H}$ has the supermartingale property, for $x \in \mathcal{X}$ let us introduce the functional $\mathcal{J}_x\colon \mathbb{M} \xrightarrow[]{} \R_+$ which is the expectation of the metric of uncertainty when conditioning the GP with respect to a new data point $(x, z(x))$ and given by:
    \begin{equation*}
        \mathcal{J}_x(\nu) = \mathbb{E}_{z(x)} \left[\mathcal{H}\left(\nu | (x, z(x)) \right) \right]
    \end{equation*}
    for $\nu \in \mathbb{M}$ and where $z(x) \sim \mathcal{N}\left(m_{\nu}(x), \Cnu(x) \right)$ and $\mathbb{E}_{z(x)}$ refers to the expectation with respect to the random output $z(x)$.
    We also introduce the functional $\mathcal{G}\colon \mathbb{M} \xrightarrow[]{} \R_+$ defined for $\nu \in \mathbb{M}$ by:
    \begin{equation}\label{eq:G_def}
        \mathcal{G}(\nu) = \sup_{x \in \mathcal{X}} \left(\mathcal{H}(\nu) - \mathcal{J}_x(\nu) \right).
    \end{equation}
    The sets of zeros of $\mathcal{H}$ and $\mathcal{G}$ are denoted respectively by $\mathbb{Z}_{\mathcal{H}}$ and $\mathbb{Z}_{\mathcal{G}}$. The inclusion $\mathbb{Z}_{\mathcal{H}} \subset \mathbb{Z}_{\mathcal{G}}$ is always true because $0 \leq \mathcal{G}(\nu) \leq \mathcal{H}(\nu)$. 
\end{definition}

SUR strategies are obtained by minimizing the expectation of the functional $\mathcal{H}$ conditioned by the knowledge of the yet unknown new design point. 
\begin{definition}
    A SUR sequential design for the functional $\mathcal{H}$ is defined as a sequential design such that for all $n \geq n_0$ with $n_0$ a given integer:
    \begin{equation*}
        x_{n+1} \in \argmin_{x \in \mathcal{X}} \left\{\mathbb{E}_{n, x} \left[\mathcal{H}(\nu_{n+1}) \right] \right\} = \argmin_{x \in \mathcal{X}} \mathcal{J}_x(\nu_n).
    \end{equation*}
    For the rest of this paper, let us introduce the notations $H_n = \mathcal{H}(\nu_n)$ and
    \begin{equation}\label{eq:def_Jn}
        J_n(x) = \mathbb{E}_{z^{(n)}(x)} \left[\mathcal{H}\left(\nu_n | (x, z) \right) \right] = \mathbb{E}_{n, x} \left[\mathcal{H}(\nu_{n+1}) \right]
    \end{equation}
    where $z^{(n)}(x) \sim \mathcal{N}\left(m_n(x), \Cn(x) \right)$ which will be frequently used in the next sections.
\end{definition}

\begin{definition}
    For any sequence of Gaussian measures $(\nu_n)_{n \geq 0}$, we say that the sequence ($\nu_n)_{n \geq 0}$ converges towards the limit measure $\nu_{\infty} \in \mathbb{M}$ when $(m_{\nu_n})_{n \geq 0}$ converges uniformly in $\mathcal{X}$ towards $m_{\nu_\infty}$, and $(k_{\nu_n})_{n \geq 1}$ converges uniformly in $\mathcal{X} \times \mathcal{X}$ towards $k_{\nu_\infty}$. 
\end{definition}

One can show that for any sequential design $(x_n)_{n \geq 1}$ and for any Gaussian process $z$ with continuous sample paths, $\nu_n = \nu | (x_1, z_1, ..., x_n, z_n)$ converges towards a limit measure $\nu_\infty \in \mathbb{M}$ \citep{bect2016supermartingale}. 
The following convergence theorem is presented in the aforementioned article and is used in this work to prove the convergence of our metric.
\begin{theorem}\label{conv_theorem}
    Let $\mathcal{H}$ be a non-negative uncertainty functional on $\mathbb{M}$ with the supermartingale property, and $\mathcal{G}$ be the functional defined in Equation \eqref{eq:G_def}. Consider $(x_n)_{n \geq 1}$ a SUR sequential design for $\mathcal{H}$. If $\mathbb{Z}_{\mathcal{H}} = \mathbb{Z}_{\mathcal{G}}$, $\mathcal{H}(\nu_n) \xrightarrow[n \to +\infty]{a. s.} \mathcal{H}( \nu_\infty)$ and $\mathcal{G}(\nu_n) \xrightarrow[n \to +\infty]{a. s.} \mathcal{G}( \nu_\infty)$, then $H_n = \mathcal{H}(\nu_n)\xrightarrow[n \to +\infty]{a. s.} 0$.
\end{theorem}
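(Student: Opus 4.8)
The plan is to recognize $(H_n)_{n \ge n_0}$ as a non-negative supermartingale and to extract from the greedy (SUR) choice of design points a quantitative, telescoping bound on the per-step uncertainty reduction. First I would record the identity that powers everything. Since the design is SUR, $x_{n+1}$ minimizes $\mathcal{J}_x(\nu_n)$, and since $x_{n+1}$ is $\mathcal{F}_n$-measurable, the conditional expectation $\mathbb{E}_n[\,\cdot\,]$ given $\mathcal{F}_n$ satisfies, by \eqref{eq:def_Jn} and the definition \eqref{eq:G_def} of $\mathcal{G}$,
\[
\mathbb{E}_n[H_{n+1}] = \min_{x \in \mathcal{X}} \mathcal{J}_x(\nu_n) = \mathcal{H}(\nu_n) - \sup_{x \in \mathcal{X}}\bigl(\mathcal{H}(\nu_n) - \mathcal{J}_x(\nu_n)\bigr) = H_n - \mathcal{G}(\nu_n).
\]
Writing $G_n = \mathcal{G}(\nu_n)$, this reads $\mathbb{E}_n[H_n - H_{n+1}] = G_n \ge 0$, which simultaneously confirms the supermartingale property along the realized design and exhibits the one-step decrement as \emph{exactly} $G_n$.

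Next I would turn this into a summable series. Taking full expectations and telescoping from $n_0$ to $N$ gives $\mathbb{E}[H_{n_0}] - \mathbb{E}[H_{N+1}] = \sum_{n=n_0}^{N} \mathbb{E}[G_n]$; since $H_{N+1} \ge 0$ this yields $\sum_{n \ge n_0} \mathbb{E}[G_n] \le \mathbb{E}[H_{n_0}] < \infty$, the integrability of $H_{n_0}$ being implicit in the supermartingale property. By Tonelli's theorem $\mathbb{E}\bigl[\sum_{n} G_n\bigr] < \infty$, so $\sum_n G_n < \infty$ almost surely, and in particular $G_n \to 0$ almost surely.

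Then I would identify the limit. By hypothesis $\mathcal{G}(\nu_n) \to \mathcal{G}(\nu_\infty)$ almost surely, and we have just shown $G_n = \mathcal{G}(\nu_n) \to 0$ almost surely; comparing the two limits forces $\mathcal{G}(\nu_\infty) = 0$ almost surely, i.e. $\nu_\infty \in \mathbb{Z}_{\mathcal{G}}$. Invoking the assumption $\mathbb{Z}_{\mathcal{H}} = \mathbb{Z}_{\mathcal{G}}$ gives $\nu_\infty \in \mathbb{Z}_{\mathcal{H}}$, that is $\mathcal{H}(\nu_\infty) = 0$ almost surely. Finally the remaining continuity hypothesis $\mathcal{H}(\nu_n) \to \mathcal{H}(\nu_\infty)$ gives $H_n \to \mathcal{H}(\nu_\infty) = 0$ almost surely, which is the claim.

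The routine supermartingale bookkeeping is harmless; the step I expect to require the most care is the passage $\sum_n \mathbb{E}[G_n] \le \mathbb{E}[H_{n_0}] < \infty$, because it hinges on $H_n$ being genuinely integrable and on $\nu_n$ being a bona fide $\mathcal{F}_n$-measurable random element of $\mathbb{M}$, so that $G_n$, $J_n(x)$ and the conditional expectations are all measurable and the telescoping is legitimate. This is exactly what the measurable-selection and regular-conditional-distribution results cited from \citep{bect2016supermartingale} are there to guarantee. The other delicate point is logical rather than technical: both continuity hypotheses are indispensable, since $\sum_n G_n < \infty$ only delivers $G_n \to 0$, and it is the convergence $\mathcal{G}(\nu_n) \to \mathcal{G}(\nu_\infty)$ that transfers this fact to the limit measure, just as $\mathcal{H}(\nu_n) \to \mathcal{H}(\nu_\infty)$ is what converts $\mathcal{H}(\nu_\infty) = 0$ back into $H_n \to 0$.
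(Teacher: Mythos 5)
Your proof is correct, but note that there is nothing in the paper to compare it against: the paper does not prove Theorem \ref{conv_theorem} at all, it imports it verbatim from \citep{bect2016supermartingale} and uses it as a black box (its own proofs concern Proposition \ref{prop:sur_criteria}, Lemmas \ref{prop:smp_for_D} and \ref{lemma}, and Theorem \ref{th:conv_H}). What you have written is a correct, self-contained reconstruction of the standard argument that the citation encapsulates: the SUR choice plus $\mathcal{F}_n$-measurability of $x_{n+1}$ gives $\mathbb{E}_n[H_{n+1}] = \inf_{x \in \mathcal{X}} \mathcal{J}_x(\nu_n) = H_n - G_n$; telescoping and non-negativity give $\sum_{n \ge n_0} \mathbb{E}[G_n] \le \mathbb{E}[H_{n_0}] < \infty$, hence $G_n \to 0$ a.s.; the hypothesis $\mathcal{G}(\nu_n) \to \mathcal{G}(\nu_\infty)$ identifies $\mathcal{G}(\nu_\infty) = 0$, the assumption $\mathbb{Z}_{\mathcal{G}} = \mathbb{Z}_{\mathcal{H}}$ transfers this zero to $\mathcal{H}$, and $\mathcal{H}(\nu_n) \to \mathcal{H}(\nu_\infty)$ concludes. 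Two small observations. First, your summability route (Tonelli) delivers more than is needed; a common variant avoids it by noting that $\mathbb{E}[H_n]$ is non-increasing and bounded below, so $\mathbb{E}[G_n] = \mathbb{E}[H_n] - \mathbb{E}[H_{n+1}] \to 0$, and Fatou's lemma gives $\liminf_n G_n = 0$ a.s., which already suffices since the assumed convergence of $\mathcal{G}(\nu_n)$ upgrades the liminf to the full limit; both variants consume exactly the same integrability. Second, that integrability, $\mathbb{E}[H_{n_0}] < \infty$, is indeed implicit in the word ``supermartingale'' as the paper defines the supermartingale property, so your reading is legitimate for the statement as posed here; be aware, though, that in the more general framework of \citep{bect2016supermartingale} uncertainty functionals may take the value $+\infty$, and there this step must be phrased via generalized non-negative supermartingales (conditional expectations of non-negative variables being always well defined). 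Your decision to delegate the measurability of $G_n$, the existence of the argmin, and the disintegration $\mathbb{E}_n[H_{n+1}] = J_n(x_{n+1})$ to the measurable-mapping results of \citep{bect2016supermartingale} is exactly where those issues belong.
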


\section{Sequential design for Bayesian inverse problems}\label{sec:4}
Let us get back to the main problem at stake here, which is the Bayesian resolution of an ill-posed inverse problem. In this work, the following question is asked. Given a limited computational budget, how can new design points be added to train the surrogate model? In this section, a first strategy based on a D-optimal design criterion is presented and then a second approach based on the SUR paradigm for a well-chosen metric of uncertainty is derived. This new strategy named IP-SUR (Inverse Problem SUR) is shown to be tractable for GP surrogates and is supported by a theoretical guarantee of almost sure convergence to zero for the metric of uncertainty.

\subsection{An introductory example}
In this section, we consider a scalar inverse problem to highlight some limitations of standard design strategies. As usual, consider a GP surrogate we wish to improve with a few model calls. Several design strategies can be carried out such as D-optimal designs or I-optimal designs \citep{antognini2010exact}. However, these methods do not account for the posterior distribution and may choose points that lie far from the posterior distribution. Although they may improve the surrogate, such design points are not valuable in this context. If they are in regions of low posterior densities, the improvements to the surrogate model will not be reflected in the MCMC sampling.

In this example, the direct model is a parabolic function $f(x) = \frac{x^2}{2}$. We act as if this direct model were a complex computer code and replace it with a surrogate model. Given one observation of the model $y = f(x_{\mathrm{th}}) + \varepsilon$, with $x_{\mathrm{th}} = 3$ and $\varepsilon \sim \mathcal{N}\left(0, 0.5^2 \right)$ we evaluate the posterior distribution $p(x | y)$ using the surrogate model. Then, we add two new design points with various design strategies. The newly added design points are plotted in Figure \ref{fig:example_1D}, along with the posterior density obtained by MCMC and the GP prediction. Between the first and second iteration, the training point selected by the IP-SUR strategy is added to the model. 
\fig{0.99}{Example_designs_1D_2iter.png}{Example of two iterations of various design strategies. The GP mean and 95 \% credible interval are plotted in red. The posterior density obtained by MCMC is plotted in blue. The vertical black lines represent the four points determined by the four different strategies.}{fig:example_1D}

The strategies investigated in this illustrative example are D-optimal, I-optimal and the two strategies proposed in this paper, named CSQ and IP-SUR. They are presented respectively in Section \ref{seq:csq_design} and Section \ref{sec:ipsur}.

In this illustrative example, the two proposed strategies select points for which both the posterior density and the model uncertainty are large. On the other hand, D-optimal and I-optimal strategies do not account for the posterior density and thus select points that may not improve the inverse problem resolution, because these points are selected in regions of low posterior density.

\subsection{Constraint set query}\label{seq:csq_design}
Our first idea is to adapt D-optimal design strategies to Bayesian inverse problems. To improve the surrogate in regions of high posterior density, the next training point $x_{n+1} \in \mathcal{X}$ can be chosen to minimize the determinant of the predictive covariance matrix on a well-chosen subset $\mathcal{B} \subset \mathcal{X}$:
\begin{equation}\label{eq:csq_strategy}
    x_{n+1} \in \mathrm{argmax}_{x \in \mathcal{B}} \ |\Cn(x)|.
\end{equation}
How can we choose $\mathcal{B}$? One could try to force the newly chosen design points to be near the maximum a posteriori (MAP) $\xmap$ defined by:
\begin{equation*}
    \xmap \in \argmax_{x \in \mathcal{X}} \ p_n(x | \mathbf{y} ).
\end{equation*}
Then for any $h \in \mathbb{R}_+$ one can define the subset $\mathcal{B}_n(h)$ such as:
\begin{equation}\label{eq:bounding_set}
    \mathcal{B}_n(h) = \left \{ x \in \mathcal{X} | \log p_n\left(\xmap | \mathbf{y} \right) - \log p_n(x| \mathbf{y}) \leq h  \right\}.
\end{equation}
To get an intuition of the influence of $h$, let us look at the acceptance probability $\alpha_n(x_1, x_2)$ for a jump from point $x_1$ to $x_2$ in the context of Metropolis-Hastings sampling with a symmetric proposal distribution:
\begin{equation*}
    \alpha_n(x_1, x_2) = \min \left\{ 1 , \frac{p_n(x_2 | \mathbf{y})}{p_n(x_1 | \mathbf{y})} \right\} = \min \left\{ 1 , \frac{L_n(\mathbf{y} | x_2) p(x_2)}{L_n(\mathbf{y} | x_1) p(x_1)} \right\}.
\end{equation*}
Then $\mathcal{B}_n(h) = \left \{ x \in \mathcal{X} | \log \alpha_n(\xmap, x) \geq -h  \right\} = \left \{ x \in \mathcal{X} | \alpha_n(\xmap, x) \geq e^{-h}  \right\}$. Thus, $e^{-h}$ for $h \in \mathbb{R}_+$ represents the lowest possible acceptance probability of a Metropolis-Hastings jump from the MAP to $x \in \mathcal{B}_n(h)$. Depending on the choice of $h$, the new query point $x_{n+1}$ will be quite close to the MAP (if $h$ is close to $0$) or it can be far from the MAP (if $h$ is large). 

The CSQ method requires solving the optimization problem \eqref{eq:csq_strategy}. This is done with the dual annealing algorithm \citep{xiang1997generalized}. The choice of a log-decreasing temperature scheme $T_n = \frac{C}{\log n}$ in the annealing algorithm guarantees that the Markov chain reaches a neighborhood of the set of minima of the function $f$ to optimize \citep{locatelli2000simulated, bohachevsky1986generalized}.

This constraint set query (CSQ) method is a simple approach for active learning in the context of Bayesian inverse problems. It can be understood as a D-optimal sequential design strategy restricted to a subset of the domain. 

\subsection{Inverse Problem SUR (IP-SUR) strategy}\label{sec:ipsur}
Now, let us use the SUR framework to derive a second sequential design well-suited for Bayesian inverse problems. To any Gaussian measure $\nu$ on $\mathbb{F} = \mathcal{C}(\mathcal{X})$, we associate an equivalent GP $z$. We denote by $p_\nu(x_* | \mathbf{y})$ the posterior density obtained from \eqref{eq:total_density} with the GP surrogate $z$. This density is proportional to the product of a likelihood $L_\nu(\mathbf
y | x_*)$ and a prior density $p(x_*)$. Given $n$ training instances $(\mathbf{x}, \mathbf{z})$, the posterior density obtained from the conditioned GP is denoted by $p_n(x | \mathbf{y})$ as usual, and the corresponding likelihood is $L_n(\mathbf{y}|x_*)$. 

We introduce the functional $\mathcal{H}\colon \mathbb{M} \xrightarrow[]{} \R_+$ defined for any measure $\nu \in \mathbb{M}$ by:
\begin{equation}\label{eq:def_H}
    \mathcal{H}(\nu) = \mathbb{E}_{p_{\nu}} \left[ \ | \Cnu(\cdot) \right| \ ] = \int_{\mathcal{X}} |\Cnu(x_*)| p_{\nu}(x_* |\mathbf{y}) dx_*.
\end{equation}
This functional is a weighted integrated Mean Squared Prediction Error (weighted IMSPE). The weight is the posterior density, which focuses the attention on the region of interest for the inverse problem. 

In practice, we would like to write a SUR criterion for this functional $\mathcal{H}$. However, using $\mathcal{H}$ leads to two caveats. First, the SUR criterion is not tractable. The reason lies in the normalization constant $Z_n$ of the posterior distribution defined by:
\begin{equation}\label{eq:def_Cn}
	Z_n = \int_\mathcal{X} p(x_*) L_n(\mathbf{y} | x_*) dx_*.
\end{equation}
At iteration $n+1$, this normalization constant $Z_{n+1}$ depends on $z_{n+1}$ and complicates the computation of $J_n(x)$ defined in \eqref{eq:def_Jn}.

Additionally, for similar reasons, we cannot prove that $\mathcal{H}$ has the supermartingale property, which means that we cannot use Theorem \ref{conv_theorem} to show the convergence of $H_n$.

To overcome these difficulties, we introduce the functional $\mathcal{D}$ defined for any Gaussian measure $\nu$ by:
\begin{equation}\label{eq:def_D}
\mathcal{D}(\nu) = \int_\mathcal{X} |\Cnu(x_*)| p(x_*) L_\nu(\mathbf{y} | x_*) dx_*.
\end{equation}
One can notice that $D_n = \mathcal{D}(\nu_n) = \frac{H_n}{Z_n}$. We modify our objective and now we seek to write a SUR criterion for the functional $\mathcal{D}$ and show that under this design, we have the convergence of $H_n$ to zero almost surely.

\subsubsection{Evaluation of the metric}
Consider a GP surrogate model $z$, conditioned on $n$ pairs input-output. Let $x$ be a new design point and $z_{n+1} = z^{(n)}(x)$ the corresponding random output. At stage $n+1$, we condition the GP with the new datapoint $(x, z_{n+1})$ which updates the predictive distribution. For $x_* \in \mathcal{X}$, the updated mean and covariance functions of the newly conditioned GP are:
\begin{equation*}
    m_{n+1}(x_* | x, z_{n+1}) = m_n(x_*) + \Cn(x_*, x) \Cn(x)^{-1} (z_{n+1} - m_n(x))
\end{equation*}
\begin{equation}\label{eq:cov_update}
    \mathbf{C}_{n+1}(x_* | x) = \Cn(x_*) - \Cn(x_*, x) \Cn(x)^{-1} \Cn(x_*, x)^T.
\end{equation}
 We are interested in the variance integrated over the posterior distribution. The quantity of interest $D_n$ is given by: 
 \begin{equation*}
     D_n = \mathcal{D}(\nu_n) = \int_{\mathcal{X}} | \Cn(x_*)| L_n( \mathbf{y} | x_* ) p(x_*) dx_*.
 \end{equation*}
 When adding a new data point $(x, z_{n+1})$, the quantity of interest becomes:
 \begin{equation*}
     D_{n+1}(x, z_{n+1}) = \mathcal{D}(\nu_n | (x, z_{n+1})) = \int_{\mathcal{X}} | \mathbf{C}_{n+1}(x_*|x) | L_{n+1}(\mathbf{y} | x_*, x, z_{n+1}) p(x_*)  dx_*
 \end{equation*}
 where $L_{n+1}(\mathbf{y}|x_*, x, z_{n+1})$ is the updated likelihood obtained from \eqref{eq:total_density} and given by:
\begin{equation*}
    L_{n+1}(\mathbf{y} | x_*, x, z_{n+1}) = ((2 \pi)^{Nd} | \Ctotup(x_* | x) | )^{-1 / 2} \exp \left[- \frac{1}{2}\left\| \mathbf{y - m}_{n+1}(x_*| x, z_{n+1}) \right\|^2_{\Ctotup(x_* | x)} \right]
\end{equation*} 
\begin{equation*}
    \Ctotup(x_*| x) = \Ctot(x_*) - \boldsymbol\lambda_n(x, x_*) \otimes \mathcal{U}_N
\end{equation*}
with $\boldsymbol\lambda_n(x, x_*) = \Cn(x_*, x) \Cn(x)^{-1} \Cn(x_*, x)^T$.

Our objective is to minimize the metric of uncertainty $D_n$. For that purpose, the stepwise uncertainty reduction paradigm is adapted for this specific problem. For any $x \in \mathcal{X}$, we introduce $F_n(x)$ defined by:
\begin{equation}
    F_n(x) = \mathbb{E}_{z_{n+1}} \left[D_{n+1}(x, z_{n+1}) \right]
\end{equation}
where $z_{n+1} \sim \mathcal{N}\left(m_n(x), \Cn(x) \right)$ and $\mathbb{E}_{z_{n+1}}$ is the expectation w.r.t. $z_{n+1}$. This quantity is the equivalent, for the functional $\mathcal
{D}$, of $J_n(x)$ defined in \eqref{eq:def_Jn}.

The SUR strategy associated to the functional $\mathcal{D}$ is thus given by:
\begin{equation}\label{eq:sur_strategy}
    x \in \argmin_{x \in \mathcal{X}} F_n(x).
\end{equation}

This criterion defines the IP-SUR strategy. We can claim the IP-SUR strategy is interesting if we can solve the following two problems.
First, we need to make sure the quantity $F_n(x)$ can be evaluated in a closed form (up to a multiplicative constant) to solve \eqref{eq:sur_strategy}. Otherwise, a suitable approximation must be used as is done in \cite{zhang2019sequential} for an expected improvement criterion for example. Second, we need to show the convergence of the functional of interest $H_n = \mathcal{H}(\nu_n)$.

The first focus is on $F_{n}(x) = \mathbb{E}_{z_{n+1}} \left[ D_{n+1}(x, z_{n+1}) \right] $. Since in our work, we have access to an ergodic Markov chain for the posterior distribution, we want to write $F_{n}(x)$ as an expectation with respect to this posterior and use the ergodicity of the chain to evaluate the expectation.

\begin{proposition}\label{prop:sur_criteria}
    The quantity $F_n(x)$ is given for $x \in \mathcal{X}$ by:
\begin{equation}\label{eq:hat_jn}
    F_n(x) = \int_\mathcal{X} |\mathbf{C}_{n+1}(x_* | x) |L_n(\mathbf{y} |x_*) p(x_*) dx_*
\end{equation}
\end{proposition}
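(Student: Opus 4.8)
The plan is to start from the definition $F_n(x) = \mathbb{E}_{z_{n+1}}\!\left[D_{n+1}(x, z_{n+1})\right]$, substitute the integral expression for $D_{n+1}(x, z_{n+1})$, and exchange the expectation over $z_{n+1}$ with the integral over $x_*$. Since every factor in the integrand (the determinant $|\mathbf{C}_{n+1}(x_*|x)|$, the likelihood $L_{n+1}$, and the prior $p$) is non-negative, Tonelli's theorem justifies this interchange without further integrability hypotheses. The decisive structural observation is that the updated covariance $\mathbf{C}_{n+1}(x_*|x)$ in \eqref{eq:cov_update} depends only on the query location $x$ and not on the realized value $z_{n+1}$ — the familiar fact that the posterior covariance of a Gaussian process is data-independent. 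Consequently $|\mathbf{C}_{n+1}(x_*|x)|$ is constant with respect to the inner expectation and factors out of it, so the whole claim reduces to the marginalization identity
\begin{equation*}
\mathbb{E}_{z_{n+1}}\!\left[ L_{n+1}(\mathbf{y}\,|\,x_*, x, z_{n+1}) \right] = L_n(\mathbf{y}\,|\,x_*), \qquad z_{n+1}\sim\mathcal{N}\!\left(m_n(x),\Cn(x)\right).
\end{equation*}

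To establish this identity I would read $L_{n+1}(\mathbf{y}|x_*, x, z_{n+1})$ as the \emph{normalized} Gaussian density $\mathcal{N}\!\left(\mathbf{y}_{\mathrm{flat}};\, \mathbf{m}_{n+1}(x_*|x,z_{n+1}),\, \Ctotup(x_*|x)\right)$ in the variable $\mathbf{y}_{\mathrm{flat}}$ — including the determinant prefactor $|\Ctotup(x_*|x)|^{-1/2}$ — whose mean is affine in $z_{n+1}$ while its covariance does not depend on $z_{n+1}$. Writing the kriging update as $\mathbf{m}_{n+1}(x_*|x,z_{n+1}) = \mathbf{m}_n(x_*) + \mathbf{A}\,(z_{n+1}-m_n(x))$ with $\mathbf{A} = \big(\Cn(x_*,x)\Cn(x)^{-1}\big)\otimes\mathbf{1}_N$ and $\mathbf{1}_N$ the $N\times 1$ vector of ones, the averaging over $z_{n+1}-m_n(x)\sim\mathcal{N}(\mathbf{0},\Cn(x))$ is a standard Gaussian convolution: a mixture of Gaussians with common covariance and Gaussian-distributed affine mean is again Gaussian, here with mean $\mathbf{m}_n(x_*)$ and covariance $\Ctotup(x_*|x) + \mathbf{A}\,\Cn(x)\,\mathbf{A}^T$. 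The convolution preserves normalization automatically, so the determinant prefactors are handled without separate bookkeeping.

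It then remains to check that this covariance collapses back to $\Ctot(x_*)$. A short Kronecker computation gives $\mathbf{A}\,\Cn(x)\,\mathbf{A}^T = \big(\Cn(x_*,x)\Cn(x)^{-1}\Cn(x_*,x)^T\big) \otimes (\mathbf{1}_N\mathbf{1}_N^T) = \boldsymbol\lambda_n(x,x_*)\otimes\mathcal{U}_N$, which is exactly the term subtracted in $\Ctotup(x_*|x) = \Ctot(x_*) - \boldsymbol\lambda_n(x,x_*)\otimes\mathcal{U}_N$. The two contributions cancel, the averaged density equals $\mathcal{N}\!\left(\mathbf{y}_{\mathrm{flat}};\,\mathbf{m}_n(x_*),\,\Ctot(x_*)\right) = L_n(\mathbf{y}|x_*)$, and substituting back yields $F_n(x) = \int_\mathcal{X} |\mathbf{C}_{n+1}(x_*|x)|\,L_n(\mathbf{y}|x_*)\,p(x_*)\,dx_*$ as claimed. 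Equivalently, this identity is the tower property of Gaussian conditioning: averaging the $(n+1)$-conditional predictive of $z(x_*)$ over the $n$-predictive law of $z(x)$ restores the $n$-predictive of $z(x_*)$, whence the two likelihoods agree.

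I expect the main obstacle to be the covariance bookkeeping in the Kronecker structure — keeping the ordering of $\Cn\otimes\mathcal{U}_N$ consistent with the flattening of $\mathbf{y}_{\mathrm{flat}}$, and verifying the cancellation precisely. Conceptually, it is worth flagging why this works for $\mathcal{D}$ but not for $\mathcal{H}$: the functional $\mathcal{D}$ integrates against the \emph{unnormalized} likelihood $L_n$, whose expectation over $z_{n+1}$ is linear and marginalizes cleanly, whereas the normalized posterior density underlying $\mathcal{H}$ carries the factor $1/Z_{n+1}$ that depends nonlinearly on $z_{n+1}$ and obstructs the same argument.
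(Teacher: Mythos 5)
Your proof is correct, and it takes a more computational route than the paper's. Both arguments share the same skeleton: exchange the expectation with the integral over $x_*$, observe that $|\mathbf{C}_{n+1}(x_*\,|\,x)|$ does not depend on $z_{n+1}$, and reduce the claim to the marginalization identity $\mathbb{E}_{z_{n+1}}\left[L_{n+1}(\mathbf{y}\,|\,x_*,x,z_{n+1})\right] = L_n(\mathbf{y}\,|\,x_*)$. The difference lies in how that identity is established. The paper handles it abstractly: it identifies the product $L_{n+1}(\mathbf{y}\,|\,x_*,x,z_{n+1})\,p_n(z_{n+1}\,|\,x,x_*)$ as the joint conditional density $p_n(\mathbf{y},z_{n+1}\,|\,x,x_*)$ and integrates out $z_{n+1}$, i.e.\ it invokes the chain rule for conditional densities, taking for granted that the updated likelihood is the bona fide conditional density of $\mathbf{y}$ given $z_{n+1}$ under the stage-$n$ model. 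You instead prove the identity from scratch as a Gaussian convolution: writing the updated flattened mean as affine in $z_{n+1}$ with matrix $\mathbf{A} = \left(\Cn(x_*,x)\Cn(x)^{-1}\right)\otimes\mathbf{1}_N$ and checking via the mixed-product rule that $\mathbf{A}\,\Cn(x)\,\mathbf{A}^T = \boldsymbol\lambda_n(x,x_*)\otimes\mathcal{U}_N$, which exactly restores $\Ctot(x_*)$ from $\Ctotup(x_*\,|\,x)$, determinant prefactors included. What your version buys is an explicit verification of the consistency (Kronecker ordering, cancellation, normalization) that the paper's one-line marginalization silently relies on; what the paper's version buys is brevity and a measure of generality, since it only uses the fact that the updated likelihood is a genuine conditional density, not the specific Gaussian algebra. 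Your closing observation that the identity \emph{is} the tower property of Gaussian conditioning is precisely the paper's argument, so in effect you have given both proofs, and your remark on why the argument works for $\mathcal{D}$ but fails for $\mathcal{H}$ (the $z_{n+1}$-dependence of $Z_{n+1}$) matches the paper's motivation for introducing $\mathcal{D}$ in the first place.
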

\begin{proof}
    The proof is given in Appendix \ref{app:proof_sur_criteria}
\end{proof}
The SUR strategy is unchanged when multiplying by a constant (i.e. a quantity independent of $x$). The criterion \eqref{eq:sur_strategy} is thus equivalent to the criterion:
\begin{equation*}
    x \in \argmin_{x \in \mathcal{X}} \frac{F_n(x)}{Z_n} .
\end{equation*}
With \eqref{eq:hat_jn}, we show that this new criterion boils down to an optimization problem where the target function $\frac{F_n(x)}{Z_n}$ can be evaluated by:
\begin{equation*}
    \frac{F_n(x)}{Z_n} \simeq \frac{1}{L} \sum_{l=1}^L |\mathbf{C}_{n+1}(X_l^{(n)} | x) |
\end{equation*}
where $(X_l^{(n)})_{1 \leq l \leq L}$ is an ergodic Markov chain whose invariant distribution is the posterior $p_n(\ \cdot \ | \mathbf{y})$. This Markov chain can be obtained by standard MCMC sampling methods. Once again, the dual annealing algorithm is used to solve the optimization problems.

Therefore, we showed that the SUR criterion is tractable. We then prove the almost sure convergence of $\mathcal{H}(\nu_n)$ toward zero, when following the IP-SUR strategy.

\subsubsection{Convergence of the IP-SUR sequential design}
We begin by proving the supermartingale property for the auxiliary functional $\mathcal{D}$. This lemma is then used to derive the main convergence theorem.
\begin{lemma}\label{prop:smp_for_D}
    The functional $\mathcal{D}\colon \mathbb{M} \xrightarrow[]{} \R_+$ defined for any Gaussian measure $\nu \in \mathbb{M}$ by:
    \begin{equation*}
        \mathcal{D}(\nu) =  \int_{\mathcal{X}} |\mathbf{C}_{\nu}(x_*)| L_{\nu}(\mathbf{y} | x_*) p(x_*) d x_*
    \end{equation*}
    has the supermartingale property. In other words, for any sequential design $(x_n)_{n \geq 1}$, there exists $n_0 \geq 1$ such that for all $n \geq n_0$, and for all $x \in \mathcal{X}$
    \begin{equation*}
        \mathbb{E}_{n, x} \left[\mathcal{D}(\nu_n | x, z(x)) \right] \leq \mathcal{D}(\nu_n).
    \end{equation*}
\end{lemma}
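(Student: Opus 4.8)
The plan is to read off the supermartingale inequality directly from the closed form of $F_n$ provided by Proposition \ref{prop:sur_criteria}, thereby reducing it to a pointwise comparison of determinants. First, note that the left-hand side of the claimed inequality is exactly $\mathbb{E}_{n,x}\left[\mathcal{D}(\nu_n \mid x, z(x))\right] = F_n(x)$ in the notation of \eqref{eq:def_Jn}, while the right-hand side is $\mathcal{D}(\nu_n) = D_n$. Proposition \ref{prop:sur_criteria} gives
\[
F_n(x) = \int_{\mathcal{X}} |\mathbf{C}_{n+1}(x_* \mid x)| \, L_n(\mathbf{y} \mid x_*) \, p(x_*) \, dx_*,
\]
whereas, by definition,
\[
D_n = \int_{\mathcal{X}} |\Cn(x_*)| \, L_n(\mathbf{y} \mid x_*) \, p(x_*) \, dx_*.
\]
Because the two integrands carry the \emph{same} nonnegative weight $L_n(\mathbf{y} \mid x_*) p(x_*)$ --- this is the crucial payoff of Proposition \ref{prop:sur_criteria}, which has already averaged out the $z_{n+1}$-dependence of the updated likelihood --- it suffices to establish the pointwise bound $|\mathbf{C}_{n+1}(x_* \mid x)| \leq |\Cn(x_*)|$ for every $x_* \in \mathcal{X}$.

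Next I would derive this bound from the covariance update \eqref{eq:cov_update}. Writing $\mathbf{C}_{n+1}(x_* \mid x) = \Cn(x_*) - \boldsymbol\lambda_n(x, x_*)$ with $\boldsymbol\lambda_n(x, x_*) = \Cn(x_*, x) \Cn(x)^{-1} \Cn(x_*, x)^T$, I observe that $\boldsymbol\lambda_n(x, x_*)$ is symmetric positive semidefinite, being of the form $M \Cn(x)^{-1} M^{T}$ with $\Cn(x)$ positive definite. The invertibility of $\Cn(x)$ holds precisely once the design contains enough distinct points for the kernel Gram matrix to be nonsingular, which is where the threshold $n_0$ enters. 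Consequently the two covariance matrices satisfy the Loewner ordering $0 \preceq \mathbf{C}_{n+1}(x_* \mid x) \preceq \Cn(x_*)$, both being genuine (hence positive semidefinite) covariance matrices.

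The final step is the monotonicity of the determinant on the positive semidefinite cone: if $0 \preceq A \preceq B$ then $\lambda_i(A) \leq \lambda_i(B)$ for every ordered eigenvalue (Weyl monotonicity), so $|A| = \prod_i \lambda_i(A) \leq \prod_i \lambda_i(B) = |B|$. Applying this with $A = \mathbf{C}_{n+1}(x_* \mid x)$ and $B = \Cn(x_*)$ yields the pointwise determinant bound; integrating against the nonnegative weight then gives $F_n(x) \leq D_n$, that is, $\mathbb{E}_{n,x}\left[\mathcal{D}(\nu_{n+1})\right] \leq \mathcal{D}(\nu_n)$. I expect the only genuine bookkeeping to be the well-posedness conditions rather than the linear algebra: confirming $\Cn(x)$ is invertible for $n \geq n_0$ so that \eqref{eq:cov_update} is meaningful, and noting that $\mathcal{D}(\nu_n)$ is finite since $|\Cn(\cdot)|$ is bounded on the compact $\mathcal{X}$ and $L_n(\mathbf{y} \mid \cdot) p(\cdot)$ is integrable. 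The Schur-complement ordering and determinant monotonicity are then entirely routine, so the heart of the argument is really the reduction afforded by Proposition \ref{prop:sur_criteria}.
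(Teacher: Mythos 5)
Your proposal is correct and follows essentially the same route as the paper: both reduce the inequality, via Proposition \ref{prop:sur_criteria}, to the pointwise bound $|\mathbf{C}_{n+1}(x_* \mid x)| \leq |\Cn(x_*)|$ under the common weight $L_n(\mathbf{y} \mid x_*)\, p(x_*)$, the paper simply asserting that bound where you spell out the Loewner ordering and determinant monotonicity. Your added care about invertibility of $\Cn(x)$ (the role of $n_0$) and finiteness of $\mathcal{D}(\nu_n)$ is a harmless, slightly more complete rendering of the same argument.
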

\begin{proof}
    The proof of this proposition is given in Appendix \ref{app:proof_smp}.
\end{proof} 
Note that $\mathcal{D}(\nu) = Z_\nu \times \mathcal{H}(\nu)$ with $Z_\nu = \int_\mathcal{X} L_\nu(\mathbf{y} | x_*) p(x_*) dx_*$ being the normalization constant of the posterior distribution.
\begin{lemma}\label{lemma}
    If $Z_n = \int_{\mathcal{X}} L_n(\mathbf{y}
    | x_*) p(x_*) dx_*$ for $n \geq 1$, then the sequence $(Z_n)_{n \geq 1}$ converges almost surely and its limit is positive and given by:
    \begin{equation*}
        Z_{\infty} = \int_{\mathcal{X}} L_\infty(\mathbf{y} |x_*) p(x_*) dx_* 
    \end{equation*}
    where $L_\infty(\mathbf{y} |x_*)$ is defined for $x_* \in\mathcal{X}$ by:
    \begin{equation*}
        L_\infty(\mathbf{y} | x_*) = ((2 \pi)^{Nd} | \boldsymbol\Sigma_\infty(x_*) | )^{-1 / 2} \exp \left[- \frac{1}{2}\left\| \mathbf{y - m_\infty}(x_*)\right\|^2_{\boldsymbol\Sigma_\infty(x_*)} \right]
    \end{equation*}
    with $m_\infty$ and $\mathbf{C}_\infty$ being the respective limits of the GP mean function $(m_n)_{n \geq 1}$ and covariance function $(\Cn)_{n \geq 1}$ and $\boldsymbol\Sigma_{\infty}(x_*) = \mathbf{C}_\infty(x_*) \otimes \mathcal{U}_N + \Cobs \otimes \mathcal{I}_N$ for $x_* \in \mathcal{X}$.
\end{lemma}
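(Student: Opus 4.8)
The plan is to deduce the convergence of the scalar sequence $(Z_n)$ from the convergence of the underlying Gaussian measures, which is the cited result that for any sequential design, and almost surely, $\nu_n \to \nu_\infty$ in the sense of the preceding definition. That convergence gives $m_n \to m_\infty$ uniformly on $\mathcal{X}$ and $k_{\nu_n} \to k_{\nu_\infty}$ uniformly on $\mathcal{X} \times \mathcal{X}$. Restricting the kernel to the diagonal yields $\Cn(x_*) = k_{\nu_n}(x_*, x_*) \to \mathbf{C}_\infty(x_*)$ uniformly in $x_*$, and hence $\Ctot(x_*) = \Cn(x_*) \otimes \mathcal{U}_N + \Cobs \otimes \mathcal{I}_N \to \boldsymbol\Sigma_\infty(x_*)$ uniformly as well. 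All subsequent steps are carried out on the almost sure event on which these uniform convergences hold.

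Next I would fix $x_* \in \mathcal{X}$ and establish the pointwise convergence $L_n(\mathbf{y}|x_*) \to L_\infty(\mathbf{y}|x_*)$. The likelihood is a fixed continuous function of the pair $(\mathbf{m}_n(x_*), \Ctot(x_*))$ built from the determinant, the matrix inverse and the exponential, so the only thing to guarantee is that $\Ctot(x_*)$ does not degenerate along the sequence. This is where the Kronecker structure helps: since $\mathcal{U}_N \succeq 0$ and $\Cn(x_*) \succeq 0$, the term $\Cn(x_*) \otimes \mathcal{U}_N$ is positive semidefinite, so $\Ctot(x_*) \succeq \Cobs \otimes \mathcal{I}_N \succ 0$, and consequently $|\Ctot(x_*)| \geq |\Cobs \otimes \mathcal{I}_N| = |\Cobs|^N > 0$ uniformly in both $n$ and $x_*$ (using $\det(A+B) \geq \det(A)$ for $A \succ 0$, $B \succeq 0$). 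Combined with the uniform convergence of the mean and covariance, continuity of $\det$, inverse and $\exp$ then yields $L_n(\mathbf{y}|x_*) \to L_\infty(\mathbf{y}|x_*)$ for every $x_*$.

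To pass the limit inside the integral defining $Z_n$, I would invoke dominated convergence. The same uniform lower bound on the determinant provides a uniform upper bound on the likelihood: because the exponential factor is at most $1$, we have $L_n(\mathbf{y}|x_*) \leq ((2\pi)^{Nd} |\Cobs|^N)^{-1/2} =: M$ for all $n$ and all $x_*$, so that $L_n(\mathbf{y}|x_*) p(x_*) \leq M p(x_*)$, and $M p$ is integrable since $\int_{\mathcal{X}} p(x_*) dx_* = 1$. Dominated convergence then gives $Z_n = \int_{\mathcal{X}} L_n(\mathbf{y}|x_*) p(x_*) dx_* \to \int_{\mathcal{X}} L_\infty(\mathbf{y}|x_*) p(x_*) dx_* = Z_\infty$ almost surely. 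Positivity of the limit is immediate: $L_\infty(\mathbf{y}|x_*) > 0$ for every $x_*$ (the prefactor is finite and positive thanks to $|\boldsymbol\Sigma_\infty(x_*)| \geq |\Cobs|^N > 0$, and the exponential is strictly positive), while $p \geq 0$ with $\int_{\mathcal{X}} p = 1$; hence $Z_\infty = \int_{\mathcal{X}} L_\infty(\mathbf{y}|x_*) p(x_*) dx_* > 0$.

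I expect the only genuinely delicate point to be the uniform nondegeneracy of $\Ctot(x_*)$, i.e. the bound $|\Ctot(x_*)| \geq |\Cobs|^N$, since it underpins both the continuity argument for the pointwise convergence of $L_n$ and the existence of the integrable dominating function. Everything else, namely the diagonal restriction of the kernel convergence, the continuity of $\det$/inverse/$\exp$, and the dominated convergence step, is routine once that lower bound is in hand. A minor point to state cleanly is that all of this is performed on the full-measure event carrying $\nu_n \to \nu_\infty$, so that both the convergence $Z_n \to Z_\infty$ and the positivity of $Z_\infty$ are almost sure.
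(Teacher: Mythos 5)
Your proof is correct and follows the same overall skeleton as the paper's: invoke the almost sure convergence $\nu_n \to \nu_\infty$ from Bect et al., deduce pointwise convergence $L_n(\mathbf{y}\,|\,x_*) \to L_\infty(\mathbf{y}\,|\,x_*)$ by continuity of the matrix operations, pass to the limit under the integral, and conclude positivity. The one place where you genuinely diverge is the justification of the dominated convergence step, and your version is the tighter one. The paper argues that $L_\infty$ is continuous (as a uniform limit of continuous functions) and hence bounded on the compact $\mathcal{X}$; strictly speaking this bounds the limit rather than the sequence $(L_n)_{n \geq 1}$, so an additional remark (uniform convergence of $L_n$, or an $n$-independent bound) is needed to exhibit a dominating function. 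Your explicit bound supplies exactly this: since $\Cn(x_*) \otimes \mathcal{U}_N \succeq 0$, one has $\Ctot(x_*) \succeq \Cobs \otimes \mathcal{I}_N$, hence $|\Ctot(x_*)| \geq |\Cobs|^N$ and $L_n(\mathbf{y}\,|\,x_*) \leq ((2\pi)^{Nd} |\Cobs|^N)^{-1/2}$ uniformly in $n$ and $x_*$, which dominates $L_n p$ by an integrable function with no appeal to continuity of the limit or compactness. The same matrix inequality guarantees nondegeneracy of $\boldsymbol\Sigma_\infty(x_*)$, and your positivity argument ($L_\infty > 0$ everywhere, $p \geq 0$ with $\int_{\mathcal{X}} p = 1$) is slightly more general than the paper's, which implicitly uses $p > 0$ on all of $\mathcal{X}$ together with compactness. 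The trade-off is minor: the paper's route is shorter because it reuses the continuity of $m_\infty$ and $\mathbf{C}_\infty$ that it needs anyway in the main convergence theorem, while yours is self-contained and makes the domination airtight.
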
 
\begin{proof}
    The proof is given in Appendix \ref{app:proof_lemma}
\end{proof}
Then, let us investigate the convergence of the functional $\mathcal{H}$. Theorem \ref{conv_theorem} is our main tool to prove the convergence of $\mathcal{H}$. 
\begin{theorem}\label{th:conv_H}
    Consider the functional $\mathcal{H}$ defined in \eqref{eq:def_H} and a SUR sequential design $(x_n)_{n \geq 1}$ given by the strategy \eqref{eq:sur_strategy}, associated to the functional $\mathcal
    D$. Then, the metric of uncertainty $H_n$ converges almost surely to $0$:
    \begin{equation*}
        H_n \xlongrightarrow[n \to +\infty]{a.s.} 0.
    \end{equation*}
\end{theorem}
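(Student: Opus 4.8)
The plan is to apply the abstract convergence result, Theorem \ref{conv_theorem}, not to $\mathcal{H}$ directly — which is intractable because of the moving normalization constant $Z_{n+1}$ — but to the auxiliary functional $\mathcal{D}$ of \eqref{eq:def_D}, and then to transfer the conclusion back to $\mathcal{H}$ through the identity $H_n = D_n / Z_n$ together with the strict positivity of the limiting normalization constant. Thus the whole argument splits into checking that $\mathcal{D}$ satisfies every hypothesis of Theorem \ref{conv_theorem}, deducing $D_n \to 0$ almost surely, and dividing by $Z_n$.

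First I would dispatch the hypotheses that come for free. Non-negativity of $\mathcal{D}$ is immediate, since the integrand in \eqref{eq:def_D} is a product of a covariance determinant, a Gaussian likelihood and a prior density. The supermartingale property is precisely Lemma \ref{prop:smp_for_D}, and the design \eqref{eq:sur_strategy} is by construction a SUR design for $\mathcal{D}$, because $F_n(x) = \mathcal{J}_x(\nu_n)$ for $\mathcal{D}$ by Proposition \ref{prop:sur_criteria}. For the continuity requirements $\mathcal{D}(\nu_n) \to \mathcal{D}(\nu_\infty)$ and $\mathcal{G}(\nu_n) \to \mathcal{G}(\nu_\infty)$ almost surely, I would invoke the convergence $\nu_n \to \nu_\infty$ (uniform convergence of the mean and covariance functions) and pass to the limit inside the integrals by dominated convergence: each $\Cn(x_*)$ is dominated in the Loewner order by the prior covariance $k(x_*,x_*)$, so the determinants stay uniformly bounded, the Gaussian likelihoods are bounded, and the prior density is integrable. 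Joint continuity of the integrand in $(x,x_*)$, together with the compactness of $\mathcal{X}$ to control the supremum defining $\mathcal{G}$, then yields the convergence of both $\mathcal{D}$ and $\mathcal{G}$ along the sequence.

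The delicate step, and the main obstacle, is the zero-set equality $\mathbb{Z}_\mathcal{D} = \mathbb{Z}_\mathcal{G}$ for the functional $\mathcal{G}$ built from $\mathcal{D}$ via \eqref{eq:G_def}. The inclusion $\mathbb{Z}_\mathcal{D} \subset \mathbb{Z}_\mathcal{G}$ always holds, so I must prove the reverse. Using Proposition \ref{prop:sur_criteria} applied to $\nu$, the one-step gain decomposes as
\[
  \mathcal{D}(\nu) - \mathcal{J}_x(\nu) = \int_\mathcal{X} \bigl( |\Cnu(x_*)| - |\mathbf{C}_\nu(x_* \mid x)| \bigr) L_\nu(\mathbf{y} \mid x_*)\, p(x_*)\, dx_*,
\]
where $\mathbf{C}_\nu(x_*\mid x)$ is the covariance of $\nu$ conditioned at $x$ as in \eqref{eq:cov_update}; each integrand is non-negative because conditioning reduces the covariance in the Loewner order. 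If $\mathcal{G}(\nu)=0$, this integral vanishes for every $x$, and since the likelihood is strictly positive and the integrand is continuous in $x_*$ (continuity of the sample paths), the integrand must vanish identically on the support of the prior, for every $x$. Evaluating along the diagonal $x = x_*$, where the reduction is total and $\mathbf{C}_\nu(x_* \mid x_*) = 0$, forces $|\Cnu(x_*)| = 0$ throughout the support of the prior, whence $\mathcal{D}(\nu) = 0$. This establishes $\mathbb{Z}_\mathcal{G} \subset \mathbb{Z}_\mathcal{D}$ and hence the required equality.

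With all hypotheses verified, Theorem \ref{conv_theorem} gives $D_n = \mathcal{D}(\nu_n) \to 0$ almost surely. It remains to transfer this to $\mathcal{H}$: by Lemma \ref{lemma} the normalization constants converge almost surely to a strictly positive limit $Z_\infty > 0$, and since $H_n = D_n / Z_n$, I conclude $H_n \to 0/Z_\infty = 0$ almost surely, which is the assertion of the theorem.
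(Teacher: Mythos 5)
Your proof is correct and follows essentially the same route as the paper's: apply Theorem \ref{conv_theorem} to the auxiliary functional $\mathcal{D}$ --- with Lemma \ref{prop:smp_for_D} supplying the supermartingale property, Proposition \ref{prop:sur_criteria} the closed form of the one-step gain, and dominated convergence on the compact $\mathcal{X}$ the continuity hypotheses --- and then transfer $D_n \to 0$ to $H_n = D_n / Z_n$ via the almost surely positive limit $Z_\infty$ from Lemma \ref{lemma}. The only divergence is stylistic: for the zero-set equality $\mathbb{Z}_{\mathcal{D}} = \mathbb{Z}_{\mathcal{G}}$ you evaluate the vanishing non-negative integrand directly on the diagonal $x_* = x$ (where conditioning annihilates the predictive covariance), whereas the paper reaches the same conclusion by contradiction, playing the continuity of $\Cn$ against the almost-everywhere vanishing of the cross-covariance $\Cn(x_*, x)$; both versions rest on the identical diagonal observation.
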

\begin{proof}
The proof is given in Appendix \ref{app:conv_proof}.
\end{proof}
\begin{remark}
    Theorem \ref{conv_theorem} can be extended to quasi-SUR sequential design, as stated by \cite{bect2016supermartingale}. $(x_n)_{n \geq 1}$ is a quasi-SUR sequential design if there exists a sequence $(\varepsilon_n)_{n \geq 1}$ of non-negative real numbers such that $\varepsilon_n \xrightarrow{n \to + \infty} 0$, and if there exists $n_0 \in \mathbb{N}$ such that $(x_n)_{n \geq 1}$ verifies $F_n(x_{n+1}) \leq \inf_{x \in \mathcal{X}} F_n(x) + \varepsilon_n$ for all $n \geq n_0$. Thus, the almost sure convergence proven in Theorem \ref{th:conv_H} is also valid for quasi-SUR designs.
    
    This remark is crucial for numerical applications since we are only guaranteed to reach a neighborhood of the global minimum in the optimization step for the IP-SUR strategy. The convergence theorem for the quasi-SUR designs is more flexible in that regard and ensures convergence even for numerical applications. 
\end{remark}

\section{Application}
\subsection{Set-up and test cases}
The IP-SUR strategy presented in this paper is applied to various test cases for different shapes of posterior distributions and is then compared to the CSQ approach described in Section \ref{seq:csq_design}, to D-optimal and I-optimal designs, and to a state-of-the-art sequential design based on Bayes risk minimization. These applications rely on multi-output Gaussian process surrogate models. In the first two test cases, we are considering simple test cases in two dimensions. In the last test case, the strategies are applied to a more difficult example taken from neutron noise analysis, a technique whose goal is to identify fissile material from measurements of temporal correlations between neutrons. 

In everything that follows, the direct model is always considered too costly to be called directly and is replaced by a GP surrogate. The observations of the direct model are noisy with a known noise covariance $\Cobs$. The multi-output Gaussian process surrogate model is based on the Linear Model of Coregionalization described in \citep{bonilla2007multi}. To build this surrogate model, a training dataset of $n_0=10$ input-output pairs $(\mathbf{x}, \mathbf{z})$ is accessible. The surrogate model obtained is denoted $z^{(0)}$. 

This initial surrogate model is our starting point from which a sequential design is built. At each iteration, a new design point is acquired with the tested strategies, and the GP surrogate is updated to form a new surrogate $z^{(n)}$. The posterior distribution obtained with the initial GP is denoted $p_0(x_* | \mathbf{y})$ and after $n \geq 1$ iterations, the new posterior distribution is $p_n(x_* | \mathbf{y})$. Similarly, $L_0(\mathbf{y} | x_*)$ and $L_n(\mathbf{y} | x_*)$ are the initial and updated likelihoods in the inverse problem. 

At each iteration, a new Markov chain $(X_l^{(n)})_{1 \leq l \leq L}$ is generated with $L = 2 \times 10^5$ with the Adaptive Metropolis algorithm \cite{haario2001adaptive}. The prior in the inverse problem is always uniform on the input domain. 

To compare the different approaches for sequential design, performance metrics are needed. The first obvious choice is the integrated variance (IVAR) as introduced in \eqref{eq:def_H}. The iterative values $H_n = \mathcal{H}(\nu_n)$ of the metric are evaluated for each iteration. 
The next metric considered is the differential entropy of the posterior distribution:
\begin{equation*}
    S_n = - \int_{\mathcal{X}} p_n( x_*|\mathbf{y}) \log p_n( x_*|\mathbf{y}) dx_*.
\end{equation*}
Because we have access to ergodic Markov chains $(X_l^{(n)})_{1 \leq l \leq L}$ and the prior is uniform, the differential entropy can be estimated by:
\begin{equation*}
    S_n \simeq \widehat{S_n} = - \frac{1}{L} \sum\limits_{l=1}^L \log p_n(X_l^{(n)} |\mathbf{y})
\end{equation*}
where the posterior densities are obtained from kernel density estimation using the MCMC samples. 

The other performance metric considered is the Kullback-Leibler \citep{kullback1951information} divergence $\mathrm{KL}\left[p_n(\ \cdot \ | \mathbf{y}) \| p_{\infty}(\ \cdot \ | \mathbf{y})\right]$ between a posterior $p_n(\ \cdot \ | \mathbf{y})$ and a reference posterior distribution $p_{\infty}(\ \cdot \ | \mathbf{y})$ obtained from a GP surrogate $z^{(\infty)}$ trained with infinitely many data points. For the numerical applications, we do not have access to $p_\infty$. We compare $p_n(\ \cdot \ | \mathbf{y})$ with the posterior obtained with a well-trained GP $p_{1000}(\ \cdot \ | \mathbf{y})$ instead. We thus define:
\begin{align*}
    \kappa_n &= \mathrm{KL}\left[p_n(\ \cdot \ | \mathbf{y}) \| p_{1000}(\ \cdot \ | \mathbf{y}) \right] = \int_{\mathcal{X}} p_n(x_* | \mathbf{y}) \log \left( \frac{p_n(x_* |\mathbf{y})}{p_{1000}(x_* |\mathbf{y})} \right) dx_*.
\end{align*}
To estimate the KL, the ergodicity of the Markov chain is used once again, in combination with the kernel density estimates for the posterior densities:
\begin{align*}
    \kappa_n \simeq \widehat{\kappa_n} = \frac{1}{L } \sum\limits_{l=1}^L \log \left(\frac{p_n(X_l^{(n)} |\mathbf{y})}{p_{1000}(X_l^{(n)} |\mathbf{y})} \right).
\end{align*}
To avoid introducing a bias, the samples $X_l^{(n)}$ used to evaluate the performance metrics should be independent from the ones used in the design strategies. However, using the same samples did not lead to any change in the final results.

\subsubsection{Banana posterior distribution}
In this first test case, the target posterior distribution has a banana shape as displayed in Figure \ref{fig:banana}.
\fig{0.6}{posterior_banana.png}{Banana-shaped target posterior distribution}{fig:banana}
This posterior distribution is similar to the one introduced in \cite{surer2024sequential} and is described by the following analytical direct model:
\begin{align*}
    f_{\mathrm{b}} \colon \mathcal{X}_{\mathrm{b}} &\longrightarrow \R^2 \\
    (x_1, x_2) &\longmapsto (x_1, x_2 + 0.03 x_1^2)
\end{align*}
where $\mathcal{X}_{\mathrm{b}}= [-20, 20] \times [-10, 10] \subset \R^2$.
For a single observation $y = (y_1, y_2)$, the posterior has the density:
\begin{equation*}
p_{\mathrm{b}}(x | y) \propto \exp \left(- \frac{1}{2} \frac{(x_1 - y_1)^2}{100} - \frac{1}{2} \left(x_2 + 0.03 x_1^2 - y_2\right)^2 \right) .
\end{equation*}
The observations $\mathbf{y} = (y^{(k)})_{1 \leq k \leq N}$ are generated with $N = 5$ and such that for $1 \leq k \leq N$ we have $y^{(k)} \sim \mathcal{N}\left(\mu,  \Cobs \right)$ with $\mu = (0, 3)$ and:
\begin{equation*} 
     \Cobs = 
    \begin{pmatrix}
        100 & 0 \\
        0 & 1 \\
    \end{pmatrix}.
\end{equation*}

\subsubsection{Bimodal posterior distribution}
In this second test case, the target posterior is bimodal as plotted in Figure \ref{fig:bimodal}.
The corresponding direct model is $f_m$ defined by:
\begin{align*}
    f_{\mathrm{m}} \colon \mathcal{X}_{\mathrm{m}} &\longrightarrow \R^2 \\
    (x_1, x_2) &\longmapsto (x_2 - x_1^2, x_2 - x_1)
\end{align*}
where $\mathcal{X}_{\mathrm{m}} = [-6, 6] \times [-4, 8] \subset \R^2$.
For a single observation $y = (y_1, y_2)$, the posterior has the density:
\begin{equation*}
p_{\mathrm{m}}(x | y) \propto \exp \left(- \frac{\sqrt{0.2}}{10} (x_2 - x_1^2 - y_1)^2 - \frac{\sqrt{0.75}}{10} \left(x_2 + 0.03 x_1^2 - y_2\right)^2 \right).
\end{equation*}
The observations $\mathbf{y} = (y^{(k)})_{1 \leq k \leq N}$ are generated with $N = 10$ and such that for $1 \leq k \leq N$ we have $y^{(k)} \sim \mathcal{N}\left(\mu,  \Cobs \right)$ with $\mu = (0, 2)$ and:
\begin{equation*} 
     \Cobs = 
    \begin{pmatrix}
        \frac{5}{\sqrt{0.2}} & 0 \\
        0 & \frac{5}{\sqrt{0.75}} \\
    \end{pmatrix}.
\end{equation*}
\fig{0.6}{posterior_bimodal.png}{Bimodal target posterior distribution}{fig:bimodal}

\subsubsection{A practical application to neutron noise analysis}
Now let us consider an example taken from a practical problem in neutron noise analysis \citep{pazsit2007neutron}. In this application, the direct model is the analytical approximation of a more complex direct model which involves a full 3D neutron transport model. 

Let us consider the following problem. Let $f_{\mathrm{p}} : \mathcal{X}_{\mathrm{p}} \mapsto \R^3$ be the analytical direct model defined in Appendix \ref{app:neutronic}, with $\mathcal{X}_{\mathrm{p}} \subset \R^4$. Let $\mathbf{y} = (y^{(k)})_{1 \leq k \leq N}$ be the $N = 20$ noisy observations of the direct model:
\begin{equation*}
    y^{(k)} = f_{\mathrm{p}}(x_{\mathrm{th}}) + \varepsilon^{(k)} \text{ with } \varepsilon^{(k)} \sim \mathcal{N} \left(\mathbf{0}, \Cobs \right)
\end{equation*}
where $x_{\mathrm{th}} \in \mathcal{X}_{\mathrm{p}}$ is the true value of the inputs. The domain of interest for the input parameters is:
\begin{equation*}
    \mathcal{X}_{\mathrm{p}} = [0.7, 0.9] \times [0.01, 0.10] \times [1 \times 10^5, 2 \times 10^5] \times [0.1, 0.9].
\end{equation*}
Some two-dimensional marginals of the posterior distribution $p_{1000}$ obtained with a well-trained GP are shown in Figure \ref{fig:posterior_neutronic}.
\fig{0.99}{posterior_neutronic.png}{Two-dimensional marginal distributions from the target posterior distribution in the neutron noise analysis case study with $x=(k_p, \varepsilon_F, S, x_s)$}{fig:posterior_neutronic}
This posterior distribution is more difficult to sample as it mostly lies on a one-dimensional manifold subspace in the four-dimensional parameter space. The decorrelation time $\tau$ is significantly larger with $\tau \simeq 200$ compared to $\tau \simeq 10$ for the two previous test cases. Thus, the number of MCMC samples is increased to $4 \times 10^5$ iterations per run for that specific test case.

\subsection{Results}
The IP-SUR sampling strategy and the CSQ method both require solving an optimization problem with a non-convex function. We recall the optimization problems are solved with the dual annealing algorithm as implemented in \textit{scipy}. The same is true for the other strategies.

The IP-SUR strategy and the CSQ method are applied iteratively $20$ times to produce new design points. For the CSQ strategy, the hyperparameter $h$ introduced in \eqref{eq:bounding_set} is set to $h=3$. The influence of $h$ is discussed afterward.

The average running times (in seconds) for the tested strategies are presented in Table \ref{tab:run_times}. They are obtained on a local computer equipped with an 11th Gen Intel(R) Core(TM) i7-11800H @ 2.30GHz processor. 
\begin{table}[h!]
\caption{Average running times (in seconds) for the various strategies, on each of the test cases.}
\vspace{0.2cm}
\label{tab:run_times}
    \centering
    \begin{tabular}{|c|c c c c c|}
    \hline
      & CSQ & IP-SUR & D-optimal & I-optimal & Sinsbeck \textit{et al.} \\
    \hline
    Banana case    &  $16$ & $235$  & $3$  & $136$ & $180$   \\
    Bimodal case   &  $14$ & $265$  & $2$  & $92$  & $156$   \\
    Neutronic case &  $18$ & $1676$ & $16$ & $440$ & $1235$   \\
    \hline
    \end{tabular}
\end{table}
\fig{0.99}{Banana_metrics_new.png}{Mean evolution and $95 \%$ confidence intervals of the performance metrics over $20$ iterations of the CSQ, IP-SUR, D-,optimal and I-optimal design strategies, replicated for $100$ experiments - Banana test case}{fig:banana_metrics}
The evolutions of the performance metrics are plotted in Figures \ref{fig:banana_metrics}, \ref{fig:bimodal_metrics}, and \ref{fig:neutronic_metrics} for all tested strategies. The empirical $95$ \% confidence interval for each metric and strategy is also displayed. The confidence intervals were obtained by running the numerical experiment $100$ times in parallel.  
\fig{0.99}{Bimodal_metrics_new.png}{Mean evolution and $95 \%$ confidence intervals of the performance metrics over $20$ iterations of the CSQ, IP-SUR,, D-optimal and I-optimal design strategies, replicated for $100$ experiments - Bimodal test case}{fig:bimodal_metrics}
\fig{0.99}{Neutronic_metrics_new.png}{Mean evolution and $95 \%$ confidence intervals of the performance metrics over $20$ iterations of the CSQ, IP-SUR, D-optimal and I-optimal design strategies, replicated for $100$ experiments - Neutronic test case}{fig:neutronic_metrics}
Both the CSQ and SUR strategies perform better than the D-optimal and I-optimal designs, which tend to target away from the posterior distribution. This increased performance is especially noticeable in the bimodal case and in the higher dimensional space of the neutronic test case.

Based on these results, one could argue that the CSQ strategy can be situationally better as it is easier to set up while providing similar performance in the end. However, two counter-arguments can be pointed out. First of all, the IP-SUR strategy does exhibit a guarantee for the convergence of the integrated variance, which offers a strong theoretical foundation. Besides, though the acquisition function in the IP-SUR strategy is more computationally intensive, the method does not rely on the prior setting of an arbitrary hyperparameter, while the CSQ design is based on the hyperparameter $h \in \R_+$ introduced in the definition of the bounding set $\mathcal{B}_n(h)$ in Equation \eqref{eq:bounding_set}. This hyperparameter quantifies how far away from the MAP the optimization problem can search. The choice of this hyperparameter can impact quite drastically the performance of the sequential design. A lower value leads to a more constraint set and thus an easier optimization problem, however, the new design points are confined to a smaller region of the parameter space. 
\fig{0.99}{Banana_metrics_new_HyperParam.png}{Mean evolution and $95 \%$ confidence intervals of the performance metrics over $20$ iterations of the CSQ strategy for $h \in \{1, 2, 3\}$, replicated for $100$ experiments - Banana test case}{fig:hyperparam_metrics_banana}
\fig{0.99}{Bimodal_metrics_new_HyperParam.png}{Mean evolution and $95 \%$ confidence intervals of the performance metrics over $20$ iterations of the CSQ strategy for $h \in \{1, 2, 3\}$, replicated for $100$ experiments - Bimodal test case}{fig:hyperparam_metrics_bimodal}
To investigate the influence of $h$, the CSQ strategy is used once again to provide $20$ new design points for varying values of $h \in \left\{1, 2, 3 \right\}$. The performance metrics obtained for the banana-shaped and bimodal posterior distributions are shown respectively in Figures \ref{fig:hyperparam_metrics_banana} and \ref{fig:hyperparam_metrics_bimodal}.
One can see that for $h=1$, the design is significantly worsened. We recall that all the previous test cases were conducted with $h=3$ which provides the best results among the selected values. However, the optimal choice of $h$ is likely dependent on the application and cannot be found easily. For this reason, the IP-SUR strategy presented in this work seems superior as it does not carry the burden of the selection of a hyperparameter. 

Finally, let us compare the performance of our IP-SUR method with the design strategy introduced in \citep{sinsbeck2017sequential}. This sequential design strategy is based on the minimization of the Bayes risk with respect to a loss function measuring the variance of the likelihood estimate with the surrogate model. The same performance metrics are used for comparison.
\fig{0.99}{Banana_metrics_new_Sinsbeck.png}{Mean evolution and $95 \%$ confidence intervals of the performance metrics over $20$ iterations of the CSQ, IP-SUR and Sinsbeck \textit{et al.} strategy - Banana test case}{fig:banana_metrics_sinsbeck}
\fig{0.99}{Bimodal_metrics_new_Sinsbeck.png}{Mean evolution and $95 \%$ confidence intervals of the performance metrics over $20$ iterations of the CSQ, IP-SUR and Sinsbeck \textit{et al.} strategy - Bimodal test case}{fig:bimodal_metrics_sinsbeck}
\fig{0.99}{Neutronic_metrics_new_Sinsbeck.png}{Mean evolution and $95 \%$ confidence intervals of the performance metrics over $20$ iterations of the CSQ, IP-SUR and Sinsbeck \textit{et al.} strategy - Neutronic test case}{fig:neutronic_metrics_sinsbeck}
The evolutions of the performance metrics are displayed in Figures \ref{fig:banana_metrics_sinsbeck}, \ref{fig:bimodal_metrics_sinsbeck} and \ref{fig:neutronic_metrics_sinsbeck} for each test case. The two methods offer overall similar performance with regard to the performance metrics investigated for the banana and neutronic cases. For the bimodal case, however, our methods seem more reliable. Besides, the IP-SUR is backed by a convergence guarantee which is not the case for the method of Sinsbeck \textit{et al.}. For the latter, the supermartingale property seems unreachable.

\section*{Conclusion}
This work presents two new sequential design strategies to build efficient Gaussian process surrogate models in Bayesian inverse problems. These strategies are especially important for cases where the posterior distribution in the inverse problem has thin support or is high-dimensional, in which case space-filling designs are not as competitive. The IP-SUR strategy introduced in this work is shown to be tractable and is supported by a theoretical guarantee of almost sure convergence of the weighted integrated mean square prediction error to zero. This method is compared to a simpler CSQ strategy which is adapted from D-optimal designs and to a strategy based on the minimization of the Bayes risk with respect to the variance of the likelihood estimate. While both methods perform better than D-optimal and I-optimal strategies, the IP-SUR method seems to provide better performance than CSQ for higher dimensions while not relying on the choice of a hyperparameter, all the while being grounded on strong theoretical foundations. It is also comparable to the Bayes risk minimization for all test cases and even superior for the bimodal test case. The latter strategy also does not display a convergence guarantee.
The IP-SUR criterion developed in this work can be related to the traditional IMSPE criterion by a tempered variant of the design strategy \citep{neal2001annealed, hu2002weighted, del2006sequential} in which we could introduce a tempering parameter $\beta \in [0, 1]$ and define a new design strategy based on the functional $\mathcal{H}_{\beta}(\nu) = \frac{1}{Z_{\nu, \beta}} \int_{\mathcal{X}} |\Cnu(x_*)| \left(L_{\nu}(\mathbf{y} | x_*) \right)^{\beta} p(x_*) dx_*$ with $Z_{\nu, \beta} = \int_\mathcal{X} \left(L_{\nu}(\mathbf{y} | x_*) \right)^{\beta} p(x_*) dx_*$. The IMSPE criterion can be obtained for $\beta = 0$ while the IP-SUR strategy is obtained for $\beta = 1$. The almost sure convergence can be verified with the same approach as for $\mathcal{H}$. 

The extension of this design strategy is yet to be explored for other types of surrogate models. Gaussian process surrogate models have the advantage of keeping the SUR criterion tractable,  though one could consider a variant of this method with approximation methods to evaluate the SUR criterion, with the risk of losing the guarantee on the convergence. The robustness of such an approach would also have to be investigated.

\newpage
\bibliographystyle{apalike}                  

\bibliography{bibliography}
\newpage
\appendix
\section{Additional proofs}
\subsection{Proof of Proposition \ref{prop:sur_criteria}}\label{app:proof_sur_criteria}

\renewcommand{\thesection}{\Alph{section}}
\renewcommand{\theproposition}{\thesection.\arabic{proposition}}

\setcounter{section}{1}

\begin{proof}
    The quantity of interest is $F_n(x) = \mathbb{E}_{z_{n+1}} \left[D_{n+1}(x, z_{n+1}) \right]$ where $z_{n+1} = z^{(n)}(x) \sim \mathcal{N}\left(m_n(x), \Cn(x) \right)$.
    Writing explicitly the expectation and inverting the integrals we get:
    \begin{align*}
       F_n(x) &= \int_\mathcal{X} \int_{\R^d}  L_{n+1}(\mathbf{y}| x_*, x, z_{n+1}) p(x_*) | \mathbf{C}_{n+1}(x_* | x) | p_n(z_{n+1} | x) dz_{n+1} dx_*
    \end{align*}
    where $p_n(z_{n+1} | x)$ is the density of the next model output $z_{n+1}$ given $x \in \mathcal{X}$, which is given by the GP predictive distribution $\mathcal{N}\left(m_n(x), \mathbf{C}_n(x) \right)$. This density does not depend on $x_*$ and thus $p_n(z_{n+1} | x) = p_n(z_{n+1} | x, x_*)$. With that, we have:
    \begin{align*}
        F_n(x) &= \int_\mathcal{X} | \mathbf{C}_{n+1}(x_* | x) | p(x_*) \int_{\R^d} p_n(\mathbf
        {y}, z_{n+1} | x, x_*) dz_{n+1} dx_* \\
        &= \int_\mathcal{X} | \mathbf{C}_{n+1}(x_* | x) | p(x_*) L_n(\mathbf
        {y} | x, x_*)dx_*.
    \end{align*}
    where $p_n(\mathbf{y}, z_{n+1} | x, x_*)$ is the conditional probability density of $\mathbf{y}$ and $z_{n+1}$ given $(x, x_*)$.
    This concludes the proof since $L_n(\mathbf{y} | x, x_*)$ does not depend on $x$.
\end{proof}

\subsection{Proof of Lemma  \ref{prop:smp_for_D} (supermartingale property for \texorpdfstring{$\mathcal{D}$}{D})}\label{app:proof_smp}



\renewcommand{\thesection}{\Alph{section}}
\renewcommand{\theproposition}{\thesection.\arabic{lemma}}
\setcounter{section}{1}
\begin{proof}
Let $x \in \mathcal{X}$. We would like to show that $F_n(x) \leq D_n$ for all $n \geq 1$, where $F_n(x) = \mathbb{E}_{n, x}\left[\mathcal{D}(\nu_n) \right] = \mathbb{E}_{z_{n+1}} \left[D_{n+1}(x, z_{n+1}) \right]$. Let $x \in \mathcal{X}$ and $n \geq 1$. We have:
\begin{equation}\label{eq:f_n}
    F_n(x) = \int_\mathcal{X} |\mathbf{C}_{n+1}(x_* | x) |  L_n(\mathbf{y} | x_*) p(x_*) dx_*.
\end{equation}
The supermartingale property derives naturally from this expression since we have $|\mathbf{C}_{n+1}(x_* | x) | \leq |\Cn(x_*)|$ for all $x \in \mathcal{X}$.
\end{proof}

\subsection{Proof of Lemma \ref{lemma}}\label{app:proof_lemma}



\renewcommand{\thesection}{\Alph{section}}
\renewcommand{\theproposition}{\thesection.\arabic{lemma}}
\setcounter{section}{1}
\begin{proof}
    Let us first prove the convergence of the mean functions $(m_n)_{n \geq 1}$. From proposition 2.9 in \cite{bect2016supermartingale}, for any sequential design and for any Gaussian process, the probability distribution of the GP given $\mathcal{F}_n = \sigma(x_1, z_1, ..., x_n, z_n)$, which is denoted by $\nu_n$, converges almost surely to a limit Gaussian measure $\nu_\infty \in \mathbb{M}$. Since the convergence of Gaussian measures is defined as the uniform convergence of the mean functions $(m_n)_{n \geq 1}$ and covariance functions $(\mathbf{C}_n)_{n \geq 1}$ we can then define $m_\infty = \lim_{n \to +\infty} m_n$ which is the mean function of the GP whose probability distribution is $\nu_\infty$. Similarly we define $\mathbf{C}_\infty = \lim_{n \to +\infty} \mathbf{C}_n$. Furthermore, since $m_n$ and $\mathbf{C}_n$ are continuous and $(m_n)_{n \geq 1}$ (resp. $(\mathbf{C}_n)_{n \geq 1}$) converges uniformly to $m_\infty$ (resp. $\mathbf{C}_\infty$), then $m_\infty$(resp. $\mathbf{C}_\infty$) is continuous.
    
    Let $x_* \in \mathcal{X}$. We need to show that $\Ctot(x_*) \xrightarrow[n \to +\infty]{a. s.} \boldsymbol\Sigma_{\infty}(x_*) = \mathbf{C}_\infty(x_*) \otimes \mathcal{U}_N + \Cobs \otimes \mathcal{I}_N$. By continuity of all the other matrix operations, we have $L_n(\mathbf{y} | x_*) \xrightarrow[n \to +\infty]{a. s.} L_\infty(\mathbf{y} | x_*)$ with:
    \begin{equation*}
        L_\infty(\mathbf{y} | x_*) = ((2 \pi)^{Nd} | \boldsymbol\Sigma_\infty(x_*) | )^{-1 / 2} \exp \left[- \frac{1}{2}\left\| \mathbf{y - m_\infty}(x_*)\right\|^2_{\boldsymbol\Sigma_\infty(x_*)} \right].
    \end{equation*} 
    To conclude, we just need to notice that $L_\infty$ is continuous (with respect to $x_*$) since $m_\infty$ and $\mathbf{C}_\infty$ are continuous. It is thus bounded on the compact set $\mathcal{X}$ and we can make use of the dominated convergence theorem to conclude:
    \begin{equation*}
        \lim_{n \to +\infty} \int_{\mathcal{X}} L_n(\mathbf{y} | x_*) p(x_*) dx_* = \int_{\mathcal{X}} \lim_{n \to +\infty} 
        L_n(\mathbf{y} | x_*) p(x_*) dx_* = \int_{\mathcal{X}} L_\infty(\mathbf{y} | x_*) p(x_*) dx_* = Z_\infty.
    \end{equation*}
    Besides, since $L_{\infty}(\mathbf{y} | x_*) p(x_*) > 0$ for all $x_* \in \mathcal{X}$ and $\mathcal{X}$ is a compact, $0 < Z_{\infty} < + \infty$ almost surely. 
\end{proof}

\subsection{Proof of Theorem \ref{th:conv_H} (almost sure convergence of \texorpdfstring{$H_n$}{Hn})}\label{app:conv_proof}



\renewcommand{\thesection}{\Alph{section}}
\renewcommand{\theproposition}{\thesection.\arabic{lemma}}
\setcounter{section}{1}

\begin{proof}
This proof is divided into two parts. First of all, we show that $D_n = \mathcal{D}(\nu_n) \xlongrightarrow[n \to +\infty]{a.s.} 0$ using the supermartingale property and the convergence theorem from \cite{bect2016supermartingale}. Then, we show the convergence for $H_n$ using Lemma \ref{lemma}. 

Let us first show that $D_n\xlongrightarrow[n \to +\infty]{a.s.} 0$. Consider a Gaussian process $z$ and a SUR  sequential design $(x_n)_{n \geq 1}$ given by the strategy \eqref{eq:sur_strategy} defined for the functional $\mathcal{D}$. Let us verify the conditions of Theorem \ref{conv_theorem} for the functional $\mathcal{D}$. We introduce the functional $\mathcal{G}$ which is defined for $\nu \in \mathbb{M}$ by:
\begin{equation*}
    \mathcal{G}(\nu) = \sup_{x \in \mathcal{X}}\left(\mathcal{D}(\nu) - \mathbb{E}_{z(x)}\left[\mathcal{D}(\nu | (x, z)) \right] \right).
\end{equation*} 
If $\nu_n$ is the probability of $z$ given $\mathcal{F}_n = \sigma(x_1, z_1, .., x_n, z_n)$, then there exists an $\mathcal{F}_{\infty}$-measurable random element $\nu_\infty \in \mathbb{M}$ such that $\nu_n \xlongrightarrow[n \to +\infty]{a.s.} \nu_\infty$, with $\mathcal{F}_{\infty} = \sigma\left( \bigcup_{n\geq 1} \mathcal{F}_n \right)$, and such that $ \nu_\infty$ is the conditional probability of $z$ given $\mathcal{F}_{\infty}$. The associated mean and covariance functions are introduced as $m_\infty$ and $\mathbf{C}_\infty$. Since the functions $m_n$ are continuous and the sequence $(m_n)_{n \geq 1}$ converge uniformly toward the limit mean function $m_\infty$, then $m_\infty$ is also continuous. The same reasoning holds for $\mathbf{C}_\infty$. 

From here, the convergence of $\mathcal{D}(\nu_n) \xrightarrow[n \to +\infty]{a.s.} \mathcal{D}( \nu_\infty)$ and $\mathcal{G}(\nu_n) \xrightarrow[n \to +\infty]{a.s.} \mathcal{G}( \nu_\infty)$ is obtained by dominated convergence on the compact $\mathcal{X}$ and by continuity of the mean and variance of the Gaussian posterior measure. 

Now, let us show that $\mathbb{Z}_{\mathcal{D}} = \mathbb{Z}_{\mathcal{G}}$, where $\mathbb{Z}_{\mathcal{D}}$ and $\mathbb{Z}_{\mathcal{G}}$ are the sets of zeros of the functionals $\mathcal{D}$ and $\mathcal{G}$. The first inclusion $\mathbb{Z}_{\mathcal{D}} \subset \mathbb{Z}_{\mathcal{G}}$ is trivial since $0 \leq \mathcal{G}(\nu) \leq \mathcal{D}(\nu)$ for all $\nu \in \mathbb{M}$. Now, let $\nu \in \mathbb{Z}_{\mathcal{G}}$. It is a Gaussian measure associated to a GP $z$. We want to show that $\mathcal{D}(\nu) = 0$. We will prove this result for the posterior measure $\nu_n$ for all $n \geq 0$ to be able to re-use the previous notations, though we are only interested in the case $n = 0$. In particular $\nu = \nu_0$.

Let us introduce $F_n(x) = \mathbb{E}_{n, x}\left[\mathcal{D}(\nu_{n+1}) \right] = \mathbb{E}_{z(x)}\left[\mathcal{D}\left(\nu_n | (x, z) \right) \right]$. From this, $\mathcal{G}(\nu_n) = D_n - \inf_{x\in \mathcal{X}} F_n(x) = 0$. Then, the supermartingale property of $\mathcal{D}$ yields that for all $x \in \mathcal{X}$:
\begin{equation*}
    0 \leq D_n - F_n(x) \leq D_n - \inf_{x\in \mathcal{X}} F_n(x) = 0
\end{equation*}
thus $D_n - F_n(x) = 0$ for all $x \in \mathcal{X}$. 

Using Equations \eqref{eq:f_n} and \eqref{eq:cov_update}, if $D_n - F_n(x) = 0$, then for almost all $x_* \in \mathcal{X}$, we have $\Cn(x_*, x) = 0$, since $L_n(\mathbf{y} | x_*) > 0$ for all $x_* \in \mathcal{X}$. In other words, the set $\mathcal{C}_x = \left\{ x_* \in \mathcal{X} | \Cn(x_*, x)\neq 0 \right\}$ has Lebesgue measure zero, for all $x \in \mathcal{X}$. \\
Let us proceed by contradiction. We suppose that $D_n \neq 0$. Thus there exists $\mathcal{X}_1 \subset \mathcal{X}$ such that $\mu(\mathcal{X}_1) > 0$ and for $x_* \in \mathcal{X}_1$, we have $\Cn(x_*) > 0$. Let $x \in \mathcal{X}_1$. Taking $x_* = x$, we have $\Cn(x_*, x) = \Cn(x) > 0$. Thus by continuity of $\Cn$, there exists an open set $\mathcal{X}_2 \subset \mathcal{X}$ such that $\mu(\mathcal{X}_2) > 0$ and for all $x_* \in \mathcal{X}_2$, $\Cn(x_*, x) > 0$. We have our contradiction since $\mathcal{X}_2 \subset \mathcal{C}_x$ and $\mu(\mathcal{C}_x) = 0$. Therefore we conclude that $\mathbb{Z}_{\mathcal{D}} = \mathbb{Z}_{\mathcal{G}}$. \\
All the assumptions of the Theorem \ref{conv_theorem} are verified, and the almost sure convergence of $D_n$ is proven:
\begin{equation*}
    D_n\xlongrightarrow[n \to +\infty]{a.s.} 0.
\end{equation*}
Now consider $(Z_n)_{n \geq 1}$ the sequence of normalizing constants, defined by $Z_n = \int_\mathcal{X} L_n(\mathbf{y} | x_*) p(x_*) dx_*$. From Lemma \ref{lemma} and using the same notations, $\lim_{n \to +\infty} Z_n= Z_\infty = \int_{\mathcal{X}} L_\infty(\mathbf{y} |x_*) p(x_*) dx_*$ which is positive almost surely. Since $D_n\xlongrightarrow[n \to +\infty]{a.s.} 0$ and $H_n = \frac{D_n}{Z_n}$, we can conclude that:
\begin{equation*}
    H_n\xlongrightarrow[n \to +\infty]{a.s.} 0.
\end{equation*}
\end{proof}

\section{Point model approximation in neutron noise analysis}\label{app:neutronic}
Neutron noise analysis describes a set of techniques that study the temporal fluctuations of neutron detector responses. For this particular work, the goal is to identify a fissile nuclear material based on measurements of temporal correlations between neutrons created by induced fissions inside the unknown material. From such noisy observations, an inverse problem is solved to identify the unknown material. In its simplest form, the link between material and observations is given by the so-called point model approximation, which is detailed hereafter.  

The material is identified by a set of parameters $x$. In the simplest formulation of the point model, four parameters are considered such that $x \in \mathcal{X} \subset \R^4$.
\begin{itemize}
    \item $0 < k_p < 1$ is the prompt multiplication factor.
    \item $\varepsilon_F$ is the ratio of detected neutrons over the number of induced fissions in the material. 
    \item $S$ is the source intensity in neutrons per second.
    \item $x_s$ is the ratio of source neutrons produced by spontaneous fissions, over the total number of source neutrons. 
\end{itemize}
The source neutrons are either created by nuclear reactions modeled by Poisson point processes, or by spontaneous fissions which are modeled by compound Poisson processes. 

The statistical model includes three different observations $y \in  \R^3$, which are recorded for each numerical simulation (or practical experiments).
\begin{itemize}
    \item $R$ is the average detection rate of neutrons in the detector. 
    \item $Y_{\infty}$ is the second order asymptotic Feynman moment. 
    \item $X_{\infty}$ is the third order asymptotic Feynman moment.
\end{itemize}
The interpretation of the two quantities $Y_{\infty}$ and $X_{\infty}$ is not detailed here. To simplify, they can be viewed as the binomial moments of order $2$ and $3$ of the number of detected neutrons in some given time window of size $T$, where $T$ is taken to be much larger than the average lifetime of a fission chain. For more details, the authors refer to \citep{pazsit2007neutron, furuhashi1968third} and \citep{feynman1956dispersion}.

In the point model framework, strong assumptions are made to provide an analytical link between inputs $x$ and outputs $y$. The material is assumed uniform, homogeneous, and infinite. All the neutrons have the same energy, and the only nuclear reactions considered are neutron captures and fissions. 

In this context, analytical relations can be derived. The reactivity $\rho = \frac{k-1}{k} < 0$ is introduced for concision.
\begin{equation*}
    R = - \frac{\varepsilon_F S \overline{\nu_s}}{\rho \overline{\nu}(\overline{\nu_s} + x_s - \overline{\nu_s} x_s)} 
\end{equation*}
\begin{equation*}
   Y_{\infty} = \frac{\varepsilon_F D_2}{\rho^2}\left(1 - x_s \rho \frac{\overline{\nu_s} D_{2s}}{\overline{\nu} D_2} \right)
\end{equation*}
\begin{equation*}
    X_{\infty} = 3\left( \frac{\varepsilon_F D_2}{\rho^2} \right)^2 \left(1 - x_s \rho \frac{\overline{\nu_s} D_{2s}}{\overline{\nu} D_2} \right) - \frac{\varepsilon_F^2 D_3}{\rho^3}\left(1 - x_s \rho \frac{\overline{\nu_s}^2 D_{3s}}{\overline{\nu}^2 D_3} \right).
\end{equation*}
In these relations, $\overline{\nu}$, $D_2$, $D_3$ are nuclear data quantifying the multiplicity distribution of the neutrons created by induced fissions. Similarly, $\overline{\nu_s}$, $D_{2s}$, $D_{3s}$ describe the multiplicity of the neutrons created by spontaneous fissions. These quantities are considered known values in this work. 

\end{document}